\newtheorem{definition}{Definition}[section]
\newtheorem{assumption}[definition]{Assumption}
\newtheorem{theorem}[definition]{Theorem}
\newtheorem{lemma}[definition]{Lemma}
\numberwithin{equation}{section}
\DeclareMathOperator*{\esssup}{ess\,sup}
\begin{document}
\title{ Analysis of optimal portfolio on finite and small-time horizons for a stochastic volatility model with multiple correlated assets}

\author{ Minglian Lin\footnote{Department of Mathematics, North Dakota State University, Email: minglian.lin@ndsu.edu} \quad and \quad Indranil SenGupta\footnote{Department of Mathematics and Statistics, City University of New York (CUNY)- Hunter College, Email: indranil.sengupta@hunter.cuny.edu}}

\date{\today}

\maketitle
\begin{abstract}

In this paper, we consider the portfolio optimization problem in a financial market where the underlying stochastic volatility model is driven by $ n $-dimensional Brownian motions. At first, we derive a Hamilton-Jacobi-Bellman equation including the correlations among the standard Brownian motions. We use an approximation method for the optimization of portfolios. With such approximation, the value function is analyzed using the first-order terms of expansion of the utility function in the powers of time to the horizon. The error of this approximation is controlled using the second-order terms of expansion of the utility function. It is also shown that the one-dimensional version of this analysis corresponds to a known result in the literature.
We also generate a close-to-optimal portfolio near the time to horizon using the first-order approximation of the utility function. It is shown that the error is controlled by the square of the time to the horizon. Finally, we provide an approximation scheme to the value function for all times and generate a close-to-optimal portfolio.

\end{abstract}
\textsc{Key Words:} Portfolio optimization, Hamilton-Jacobi-Bellman equation, quantitative finance, utility function, correlated Brownian motions. \\

\noindent \textsc{AMS subject classifications:} 91G10, 93E20, 60G15. \\


	
\section{Introduction}

One of the most fundamental questions in finance is related to portfolio optimization. If we start with a given amount of initial capital, we can ask the questions related to the capital withdrawal streams that we can fund into the future. Moreover, among those streams, how do we select the ``best" one, according to some given criteria of optimality? In other words, if the wealth of a portfolio is $x$ at an initial time $t$, the usual goal is to invest in such a way that maximizes the expected utility of wealth at the time horizon $T$. Mathematically, if $U_T$ is a function modeling the investor's utility of wealth at time $T$, then the goal is typically to find a portfolio $\pi$, that maximizes $\mathbb{E}(U_T| \mathcal{F}_t)$, where $\mathcal{F}_t$ represents the $\sigma$-algebra of information up to the initial time $t$. One of the first major works of portfolio theory (see \cite{Merton}), examines the combined problem of optimal portfolio selection and consumption rules for an individual in a continuous-time model where the investor's income is generated by stochastic returns on assets. The paper derives the optimality equations for a multi-asset problem when the rate of returns is generated by a standard Brownian motion.

In the work \cite{Kumar} this portfolio-optimization problem is considered in a simple incomplete market and under a general utility function, and a closed-form formula for a trading strategy is obtained. It is shown that the formula approximates the optimal trading strategy. The problem of portfolio optimization is particularly interesting if the corresponding process is highly volatile. This is accomplished in our earlier work \cite{LinSG}, where we obtain a closed-form formula for an approximation to the optimal portfolio in a small-time horizon, where the asset price incorporates stochastic volatility with jumps. we consider the portfolio optimization problem in a financial market under a general utility function. We obtain a closed-form formula for an approximation to the optimal portfolio in a small-time horizon. This is obtained by finding the associated Hamilton-Jacobi-Bellman (HJB) integro-differential equation. In \cite{LinSG}, we also prove the accuracy of the approximation formulas. This work is further developed in \cite{LinSG2}, where VaR is estimated for a diversified portfolio consisting of multiple cash commodity positions driven by models similar to \cite{LinSG}.

It is also worth noting that for the case of the infinite-horizon, the optimal consumption problem with a path-dependent reference under exponential utility is studied in \cite{FinSt1}. It is shown that for this case, the HJB equation can be heuristically expressed in a piecewise manner across different regions to take into account different constraints. In \cite{FinSt3} a continuous-time stochastic optimization problem with infinite horizon is considered that includes as a particular case portfolio selection problem under transaction costs for models of stock and currency markets. With a relevant geometric formalism, the authors show that the Bellman function is the unique viscosity solution of the corresponding HJB equation. On the other hand, in the high-frequency limit, the paper \cite{FinSt2} finds an explicit formula for locally mean-variance optimal strategies. In the limit, it is shown that conditionally expected increments of fractional Brownian motion converge to a white noise, shedding their dependence on the path history and the forecasting horizon and making dynamic optimization problems tractable. Consequently, it is assumed that the asset price is driven by a fractional Brownian motion. It is worth noting that fractional Brownian motions, in the recent literature, have been successfully implemented for various financial products such as variance and volatility swaps, and hedging (see \cite{Nick}).

The works described thus far consider a single asset portfolio. However, it is well-known that investment on different assets can provide best opportunity to increase the value of portfolio consistently and over the long term. A major benefit of multi-asset investing is its underlying diversification. By investing in more than one kind of asset, one can spread risk and improve the potential for returns. It is known that a multi-asset portfolio is flexible to respond to changing market conditions. Such portfolio finds out areas of greater potential return while attempting to avoid sectors that could add unnecessary risk. As a result, a multi-asset investment approach aims to provide smoother positive returns over time and it dynamically adjusts exposure to take advantage of short-term opportunities and/or to avoid unnecessary risk.

In \cite{Multi2, Multi1}, the authors provide (and generalize) a computational study of the problem of optimally allocating wealth among multiple stocks and a bank account, to maximize the infinite horizon discounted utility of consumption. The model studied in that work allows for correlation between the price processes, which in turn gives rise to interesting hedging strategies. The authors provide a numerical method for dealing with this problem and present computational results that describe the impact of volatility, risk aversion of the investor, level of transaction costs, and correlation among the risky assets on the structure of the optimal policy. 
In the paper \cite{Kumar}, the authors consider the corresponding HJB partial differential equation, where the associated risky process is assumed to be driven by only one Brownian motion. For the small-time horizon, the approximate value function is obtained by constructing classical sub-solution and super-solution to the HJB partial differential equation using a formal expansion in powers of horizon time. It is worth noting that the method of super-solution and sub-solution to bind the solution of interest has been recently used to optimize portfolios in the commodity market (see \cite{Roberts1, Roberts2}).

In this paper, we obtain a closed-form formula for an approximation to the optimal portfolio in a small-time horizon, where the underlying stochastic volatility model is driven by an $ n $-dimensional Brownian motion. At first, we derive an HJB equation including the correlations among the standard Brownian motions.
 Similar to the observation of \cite{Kumar}, we note that the well-posedness of the associated HJB equation is not established. Consequently, we do not assume the value function is a classical solution of the HJB equation. (It is also worth noting that without assuming the value function satisfies the HJB equation, in \cite{Nadtochiy}, it is proved that the integral of the viscosity solution of the marginal HJB equation is indeed the value function.)

For the optimization of portfolios, we choose the Jacobi method in order to apply the smallest number of approximations. Then we approximate the value function using the first-order terms of expansion of the utility function in the powers of time $ (t) $ to the horizon $ (T) $ (i.e.,  in the powers of $ T - t $). We control the error of this approximation using the second-order terms of expansion of the utility function. After dimension reduction, it is shown that the approximation of value function agrees with the primary result in \cite{Kumar}. This demonstrates that the proposed  $ n $-dimensional model in this paper includes all of the advantages of the corresponding $ 1 $-dimensional model in the literature. We also generate a close-to-optimal portfolio near the time to horizon $ (T - t) $ using the first-order approximation of the utility function. The error is controlled by the square of the time to horizon $ (T - t)^2 $. Finally, we provide an approximation scheme to the value function for all times $ t \in  [0, T] $ and generate the close-to-optimal portfolio on $ [0, T] $.

The rest of the paper is organized as follows. In Section \ref{sec2}, we provide the detailed description of the underlying model and assumptions. In Section \ref{sec3}, we provide the main result of this paper. We also provide a result for a special case of the main theorem. We show that this special case agrees with related results in the literature. In Section \ref{sec4}, we provide an approximation result of the main result that is described in Section \ref{sec3}. In addition, we briefly outline the portfolio optimization on a finite time horizon. Numerical results are shown in Section \ref{sec5}. Relevant tables and figures for numerical examples are provided in this section. Finally, a brief conclusion is provided in Section \ref{sec6}. Proofs of some technical lemmas are provided in Appendix \ref{apendixA}. The proof of the main theorem is provided in Appendix \ref{apendixA1}. A justification of a major assumption is provided in Appendix \ref{apendixC}.

\section{Market model and underlying assumptions}
\label{sec2}

	We consider the following stochastic volatility market model, that is driven by $ n $-dimensional Brownian motions. 
	We suppose that the investor will be investing in \emph{one} risk-free asset and $ n $ risky assets. Without loss of generality, we let the unit price of risk-free asset be $1$, and define the unit price of risky assets $ \boldsymbol{S}(t) = (S_1(t), \cdots, S_n(t))^\top $ by the following processes:
	\begin{align} 
			dS_i(t) = S_i(t) \Big[ \mu_i \big(Y(t)\big)dt + \sigma_i \big(Y(t)\big)dW_i(t) \Big], \quad S_i(0) > 0, \quad i = 1, \dots, n,
\label{koekoe}
	\end{align}
where $W_i$, $i=1,\dots,n$ are standard Brownian motions.  This can be written as
	\begin{align} \label{St}
			d\boldsymbol{S}(t) = \boldsymbol{S}(t) \circ \Big[ \boldsymbol{\mu} \big(Y(t)\big)dt + \boldsymbol{\sigma} \big(Y(t)\big)\circ d\boldsymbol{W}(t) \Big],
	\end{align}
 where $ \circ $ is Hadamard product, $ t $ is time, $ \boldsymbol{\mu}(\cdot) = (\mu_1(\cdot), \cdots, \mu_n(\cdot))^\top $ is the mean rate of return, $ \boldsymbol{\sigma}(\cdot) = (\sigma_1(\cdot), \cdots, \sigma_n(\cdot))^\top $ is the volatility, and $ \boldsymbol{W}(t) = (W_1(t), \cdots, W_n(t))^\top $.
	Here we consider that $ \boldsymbol{\mu}(\cdot) $ and $ \boldsymbol{\sigma}(\cdot) $ are the functions of the same stochastic factor $ Y(t) $. With the insight from \cite{Nadtochiy} (equation (2) in \cite{Nadtochiy}), we define the stochastic factor $ Y(t) $ by
	\begin{align} \label{Yt}
		dY(t) = b\big(Y(t)\big)dt 
		+ a\big(Y(t)\big) \left[ \sum_{i=1}^{n}\omega_i dW_i(t) + \bigg[1 - \sum_{i=1}^{n} \omega_i^2\bigg]^{\frac{1}{2}} dW_0(t) \right], 
		\quad  Y(0) \in \mathbb{R},
	\end{align}
	where $ a $ and $ b $ are certain functions, and $ |\omega_i| < 1 $, $ i = 1,\dots, n $.
	Here $ \big(W_i(t)\big)_{i=1}^n $ are correlated whereas the standard Brownian motion $ W_0(t) $ is independent of each of the $ \big(W_i(t)\big)_{i=1}^n $.
	The process $ \boldsymbol{W}^*(t) = (\boldsymbol{W}^\top(t), W_0(t))^\top $ is a standard Brownian motion adapted to the natural filtration $ \mathcal{F}(t) = \sigma\big(\boldsymbol{W}^*(s): 0<s<t\big) $ where $ \sigma $ is $ \sigma $-algebra.
	In addition, we denote the constant correlation matrix of $ \boldsymbol{W}(t) $ by $ \big(\rho_{ij}\big)_{n\times n} $.
	More descriptions on these functions will be provided in Assumption \ref{aiai}.  The financial motivation behind this model is driven by the prices of individual stocks in an index (for example, S\&P 500).

	We denote $ \pi_i(t) $ as the discounted amount of wealth invested into the $ i $-th risky asset, $ i = 1, \dots, n $.  Also, we consider \emph{long-only portfolios}. 
	We then denote $ \pi_0(t) $ as the discounted amount of wealth invested into the risk-free asset.
	It is clear that the total discounted wealth $ X(t) $ is calculated by $ X(t) = \pi_0(t) + \sum_{i=1}^{n} \pi_i(t) $.
	If we identify the portfolio by $ n $-tuple $ \big(\pi_i(t)\big)_{i=1}^n $, then we define the discounted wealth process for each tuple by 
	\begin{align*}
		X_i(t) = \frac{\pi_0(t)}{n} +\pi_i(t), \quad i = 1, \dots, n,
	\end{align*}
	such that $ X(t) = \sum_{i=1}^{n}X_i(t) $.
Let $ R $ be the constant non-risky interest rate. 	As we are considering \emph{long positions} in the portfolio, then the discounted amounted $\pi_i$ ($i=1,\dots,n$) invested in each asset can always be equivalently written in terms of the proportions $p_i$ ($i=1,\dots,n$), where $p_i \in (0,1)$ is the proportion invested in the $i$-th stock. Consequently, $ \pi_i(t)=p_i \exp(-tR) S_i(t) $, for all $ i = 1, \dots, n $. We note the geometric Brownian motion type dynamics in \eqref{koekoe} implies that the asset prices are non-negative.
	With initial condition $ S_i(0) > 0 $, we have $ S_i(t) > 0 $. 	Hence $ \pi_i(t) > 0 $, which immediately implies $ X_i(t) $, $ i = 1, \dots, n $, are strictly positive.

Now, we define the dynamics of the discounted wealth processes $ \boldsymbol{X}(t) = (X_1(t), \cdots, X_n(t))^\top $ of the $ n $-tuple by
	\begin{align} \label{Xt}
		d\boldsymbol{X}(t) = \boldsymbol{\sigma}\big(Y(t)\big) \circ \boldsymbol{\pi}\big(t, \boldsymbol{X}^\top(t), Y(t)\big) \circ \Big[ \boldsymbol{\lambda}\big(Y(t)\big)dt + d\boldsymbol{W}(t) \Big],
	\end{align}
	with $ X_i(0)>0 $, $ i = 1, \dots, n $, where $ \boldsymbol{\pi}(\cdot) = (\pi_1(\cdot), \cdots, \pi_n(\cdot))^\top $, and $ \boldsymbol{\lambda}(\cdot) = (\lambda_1(\cdot), \cdots, \lambda_n(\cdot))^\top $ where $ \lambda_i \big(Y(t)\big) = \frac{\mu_i(Y(t)) - R}{ \sigma_i(Y(t)) }$, $ i = 1, \dots, n $, are the Sharpe ratios.
	Here we separate the total wealth $ X(t) $ into $ n $-tuple $ \big(X_i(t)\big)_{i=1}^n $,
	because it's convenient to consider and add the correlations $ (\rho_{ij})_{i,j=1}^n $ occurring among $ \big(W_i(t)\big)_{i=1}^n $ of $ \big(X_i(t)\big)_{i=1}^n $ to the Hamilton-Jacobi-Bellman (HJB) equation \eqref{HJB_0}.

	\begin{assumption} [Model Assumptions]
\label{aiai}
		Let $ C^k(\mathbb{R}) $, $ k \in \mathbb{N} $, be the space of the $ k $-th order continuously differentiable functions. It is assumed that the coefficients in stochastic differential equations \eqref{St} and \eqref{Yt}, as well as the market price of risk $ (\lambda_i)_{i=1}^n $ satisfy:
		\begin{enumerate}
			\item $ \mu_i, \sigma_i \in C(\mathbb{R}) $, $ i = 1, \dots, n $; $ b \in C^1(\mathbb{R}) $; $ a, \lambda_i \in C^2(\mathbb{R}) $, $ i = 1, \dots, n $;
			\item The $ a $ and $ \sigma_i $, $ i = 1, \dots, n $, are strictly positive;
			\item The $ b, b', a, 1/a, a', a'', \lambda_i, \lambda_i' $, and $ \lambda_i'' $, $ i = 1, \dots, n $, are absolutely bounded.
		\end{enumerate}
	\end{assumption}
	
	Similar to Assumption 1 in \cite{Nadtochiy}, the above model assumptions ensure that the system of stochastic differential equations (SDEs) \eqref{St} and \eqref{Yt} has a unique strong solution.
	
	\begin{definition} [Admissible Portfolio] \label{def_1}
		A portfolio $ \boldsymbol{\pi}(t) = \boldsymbol{\pi}\big(t, \boldsymbol{X}^\top(t), Y(t)\big) $  is admissible if
		\begin{enumerate}
			\item it is progressively measurable w.r.t. natural filtration $ \mathcal{F}(t) = \sigma\big(\boldsymbol{W}^*(s): 0<s<t\big) $;
			 \item it is locally square-integrable, i.e. for $ i = 1, \dots, n $, $\displaystyle{\mathbb{E} \bigg( \int_0^T \sigma_i^2(t) \pi_i^2(t) dt \bigg) < \infty}$;
			\item for initial wealth $ x_i \in (0, \infty) $, $ i = 1, \dots, n $, the discounted wealth process \eqref{Xt} is strictly positive, for all $ t \in [0, T] $;
			\item for $ i = 1, \dots, n $, and for $ r >0 $, $\displaystyle{\mathbb{E} \bigg( \int_0^T \frac{\sigma_i^2(t) \pi_i^2(t)}{X_i^{2r}(t)} dt \bigg) < \infty}$ where $ \sigma_i(t) = \sigma_i\big(Y(t)\big) $.
		\end{enumerate}
	\end{definition}

Let $ U = U\big(t, \boldsymbol{X}^\top(t), Y(t)\big) $ be the utility function, and $ J = J\big(t, \boldsymbol{X}^\top(t), Y(t)\big) $ be the value function. Over the set $ \mathcal{A} $ of all admissible portfolios, we define the value function by
	\begin{align} \label{J}
		J(t,\boldsymbol{x},y) = \esssup_{\boldsymbol{\pi}\in \mathcal{A}} \mathbb{E}\Big(U\big(T, \boldsymbol{X}^\top(T)\big) \Big| \boldsymbol{X}^\top(t) = \boldsymbol{x}, Y(t) = y \Big),
	\end{align}
where $ \boldsymbol{x} = (x_1,\dots,x_n) $, $ T $ is terminal time, and $ \mathbb{E} $ is the expection function.

	For convenience, we denote $ b = b(y) $, $ a = a(y) $, and $ W_i = W_i(t) $, $ \sigma_i = \sigma_i(y) $, $ \pi_i = \pi_i(t, \boldsymbol{x}, y) $, $ \lambda_i = \lambda_i(y) $, $ \forall i $. From the system of SDEs \eqref{Yt} and \eqref{Xt}, we have
	\begin{align*}
		\begin{bmatrix} dt \ \ \\ dx_1 \\ \vdots \\ dx_n \\dy \ \ \end{bmatrix}  = 
		\begin{bmatrix} 1 \\ \sigma_1 \pi_1 \lambda_1 \\ \vdots \\ \sigma_n \pi_n \lambda_n \\ b \end{bmatrix}dt 
		+ \begin{bmatrix} 0 \\ \sigma_1 \pi_1 \\ \vdots \\ 0 \\ a \omega_1 \end{bmatrix} dW_1
		+ \cdots
		+ \begin{bmatrix} 0 \\ 0 \\ \vdots \\ \sigma_n \pi_n  \\ a \omega_n \end{bmatrix} dW_n
		+ \begin{bmatrix} 0 \\ 0 \\ \vdots \\ 0 \\ a \big[1-\sum_{i=1}^n \omega_i^2\big]^{1/2} \end{bmatrix}dW_0.
	\end{align*}
	As the dependence merely occurs among $ \big( W_i(t) \big)_{i=1}^n $,
	the value function $ J $ is \emph{formally} a solution to the HJB equation given by
	\begin{align} \label{HJB_0}
	 	U_t & + \max_{\boldsymbol{\pi}} \Bigg(
	 	\sum_{i=1}^{n}\sigma_i  \pi_i \lambda_i U_{x_i} + \frac{1}{2} \sum_{i,j=1}^{n} \rho_{ij} \sigma_i \pi_i  \sigma_j \pi_j U_{x_ix_j} + a\sum_{i=1}^{n} \sigma_i \pi_i \omega_i  U_{x_iy}
	 	\Bigg)
	 	+ bU_y + \frac{1}{2}a^2U_{yy} = 0
	\end{align}
	with terminal condition $ U(T,\boldsymbol{x},y) = U(T,\boldsymbol{x}) = U_T(\boldsymbol{x}) $.  We assume that $ U_T = U_T(\boldsymbol{x}) $ has the properties as below. 
	
In some previous works (see \cite{Kumar, LinSG, Nadtochiy}) similar model assumptions are used to construct approximate solutions of the corresponding HJB equation. In particular, the approximate solutions for the papers \cite{Kumar} and \cite{LinSG} already incorporate the fact that corresponding HJB equation admits a unique classical solution. For \eqref{HJB_0}, since we implement similar model and assumption, it admits unique classical solution. Consequently, the probabilistic setup of this paper is same as \cite{Kumar} and \cite{LinSG}.

	\begin{definition}
		For the rest of the paper we denote $ \| \boldsymbol{x} \| := \sum_{i=1}^{n} |x_i| = \sum_{i=1}^{n} x_i $, for $x_i \geq 0$, $i=1, \dots,n$. 
	\end{definition}
	
	\begin{assumption} [Utility Assumptions] \label{Assume_UT}
	 	It is assumed that $ U_T(\boldsymbol{x}) $ is strictly increasing along $ \| \boldsymbol{x} \| $ and concave in $ C^5\big((\mathbb{R}^+)^n\big) $, and 
	 	$ U_T(\boldsymbol{x}) $ behaves as either of the following cases, asymptotically as $ \| \boldsymbol{x} \| $ approaches to both $ 0 $ and $ \infty $:
	 	\begin{enumerate}
	 		\item Logarithmic function: $ U_T(\boldsymbol{x}) = \ln \| \boldsymbol{x} \| $.
	 		\item Mixture of power functions: $ U_T(\boldsymbol{x}) = \frac{c_1}{1-\alpha}\| \boldsymbol{x} \|^{1-\alpha} + \frac{c_2}{1-\beta} \| \boldsymbol{x} \|^{1-\beta} $, where $ \alpha, \beta > 0 $, $ \alpha, \beta \neq 1 $, and  $ c_1>0,  c_2=0 $, or $ c_1=0,  c_2>0 $, or $ c_1, c_2>0 $.
	 	\end{enumerate}
	 \end{assumption}
	
	 If the system of SDEs \eqref{Yt} and \eqref{Xt} is Markovian, then the expression being maximized in HJB equation \eqref{HJB_0} achieves its maximum at the optimal portfolios $ \big(\widehat{\pi}_i\big)_{i=1}^n $. 
	 
We first observe that the Hamiltonian in \eqref{HJB_0} is quadratic in $ \boldsymbol{\pi} $ and thus its optimum $ \widehat{\boldsymbol{\pi}} $ can be found with an explicit expression. For this, we rewrite this expression as
	 \begin{align*}
	 	\boldsymbol{\phi}^\top \boldsymbol{\pi} + \frac{1}{2} \boldsymbol{\pi}^\top \boldsymbol{H}\boldsymbol{\pi} + a \boldsymbol{\psi}^\top \boldsymbol{\pi}
	 \end{align*}
	 where $ \boldsymbol{\phi} = (\sigma_1 \lambda_1 U_{x_1}, \cdots, \sigma_n \lambda_n U_{x_n})^\top $, $ \boldsymbol{\psi} = (\sigma_1 \omega_1  U_{x_1y}, \cdots, \sigma_n \omega_n  U_{x_ny})^\top $, and $ \boldsymbol{H} = (\rho_{ij} \sigma_i  \sigma_j  U_{x_ix_j})_{n \times n} $.
	 By the first-order condition, we obtain
	 \begin{align*}
	 	\widehat{\boldsymbol{\pi}} = \boldsymbol{H}^{-1} \big[-\boldsymbol{\phi}-a\boldsymbol{\psi}\big].
	 \end{align*}
	 In the case of double risky assets portfolio, i.e. $ n=2 $, 
	 \begin{align*}
	 	\widehat{\pi}_1 & = \frac{-\big[\lambda_1 U_{x_1}+a\omega_1 U_{x_1y}\big]U_{x_2x_2} + \rho_{12}U_{x_1x_2}\big[\lambda_2U_{x_2} + a \omega_2 U_{x_2y}\big]}
	 	{\sigma_1\big[U_{x_1x_1}U_{x_2x_2} - \rho_{12}^2 U_{x_1x_2}^2\big]},\\
	 	\widehat{\pi}_2 & = \frac{-\big[\lambda_2 U_{x_2}+a\omega_2 U_{x_2y}\big]U_{x_1x_1} + \rho_{12}U_{x_1x_2}\big[\lambda_1U_{x_1} + a \omega_1 U_{x_1y}\big]}
	 	{\sigma_2\big[U_{x_1x_1}U_{x_2x_2} - \rho_{12}^2 U_{x_1x_2}^2\big]}.	
	 \end{align*}

However, 	in the general case, i.e. for arbitrary $ n $, it is not possible to find a straighfoward explicit expression of $ \boldsymbol{H}^{-1} $. Consequently, for the general case, we  solve for the individual optimal portfolio $ \widehat{\pi}_i $.	 	 
	
	By the first-order condition, we obtain
	\begin{align} \label{pi}
		\widehat{\pi}_i =  
		\sum_{\substack{j=1 \\ j \neq i}}^n \Bigg[ - \frac{\sigma_j \rho_{ij} U_{x_ix_j} \widehat{\pi}_j}{2 \sigma_i U_{x_ix_i}} \Bigg]
		+ \frac{-\lambda_iU_{x_i}-a \omega_i U_{x_iy}}{\sigma_i U_{x_ix_i}}
		, \quad i = 1, \dots, n.
	\end{align}
	Note that equation \eqref{pi} can be solved using the Jacobi iterative method or the Gauss-Siedel iterative method. It is clear that \eqref{pi} can be written as a liner system, where the $ i $-th equation is given by: $ \textbf{U} \boldsymbol{\widehat{\pi}} = \boldsymbol{v} $ for $ \widehat{\pi}_i $, where $ \textbf{U} = (u_{ij})_{n\times n} $, $ \boldsymbol{\widehat{\pi}} = (\widehat{\pi}_1, \dots, \widehat{\pi}_n)^\top $, $ \boldsymbol{v} = (v_1, \dots, v_n)^\top $,
	\begin{align*}
		u_{ii} = U_{x_ix_i}, \
		u_{ij} = \frac{\sigma_j \rho_{ij} U_{x_ix_j}}{2\sigma_i}, i \neq j, 
		\text{ and } v_i = \frac{-\lambda_iU_{x_i}-a \omega_i U_{x_iy}}{\sigma_i}.
	\end{align*}
	With $ (\| \boldsymbol{x} \|)_{x_i} = 1 $, $ \forall i $, we observe
	\begin{align*}
		(U_T)_{x_i} = (U_T)_{x_j} \text{ and } (U_T)_{x_ix_i} = (U_T)_{x_ix_j}, i,j = 1, \dots, n, i \neq j.
	\end{align*}
	Since the primary objective of this paper is to approximate the value function by an expansion of terminal utility function, we denote $ U = U\big(t,U_T,y\big) $ under terminal condition. Then
	\begin{align*}
		U_{x_ix_i} = U_{x_ix_j} = \frac{\partial^2 U}{\partial (U_T)^2}\big[(U_T)_{x_i}\big]^2 + \frac{\partial U}{\partial U_T}(U_T)_{x_ix_i}, \quad i,j = 1, \dots, n, \quad i \neq j. 
	\end{align*}
	Hence
	\begin{align} \label{iiij}
		u_{ij} = \frac{\sigma_j \rho_{ij}}{2\sigma_i}u_{ii} = \frac{\sigma_j \rho_{ij}}{2\sigma_i}u_{jj}, \quad i,j = 1, \dots, n, \quad i \neq j.
	\end{align}
	Using equations \eqref{iiij}, we assume the volatility functions have the following property such that $ \textbf{U} $ is a strictly diagonally dominant matrix.
	\begin{assumption} \label{Conv}
		Let $ \rho_{ij} $ be the correlation between standard Brownian motions $ W_i $ and $ W_j $, $ i,j = 1, \dots, n $, $ i \neq j $. Let $ \sigma_i(y) $, $ i = 1, \dots, n $, be the volatility functions where $ y = Y(t) \in \mathbb{R} $. We assume that for each $ i = 1, \dots, n $, the following inequality holds:
		\begin{align*}
			\sum_{j=1}^n |\rho_{ij}| \sigma_j(y) < 3\sigma_i(y).
		\end{align*}
	\end{assumption}

	\noindent Although not all portfolios satisfy the above assumption, there are many cases for which an investor can build a portfolio satisfying it. Based on one-year data and half-year data, several such examples are provided in Appendix \ref{apendixC}.
	
	We denote $ \textbf{D} = diag(u_{11},\dots, u_{nn}) $,
	\begin{align*}
		\textbf{L} = \begin{bmatrix} 
			0 & \cdots & \cdots & 0 \\ 
			-u_{21} & \ddots &  & \vdots \\ 
			\vdots & \ddots & \ddots & \vdots \\ 
			-u_{n1} & \cdots & -u_{n,n-1} & 0\end{bmatrix}, 
		\text{ and } 
		\textbf{V} = \begin{bmatrix} 
			0 & -u_{12} & \cdots & -u_{1n} \\ 
			\vdots & \ddots & \ddots & \vdots \\ 
			\vdots &  & \ddots & -u_{n-1,n} \\ 
			0 & \cdots & \cdots & 0\end{bmatrix}.
	\end{align*}
	If we choose the Gauss-Seidel method, then it is necessary to have an additional approximation to expand $ (\textbf{D} - \textbf{L})^{-1} $. 
	So we choose the Jacobi method, and define
	\begin{align*}
		\textbf{T} = \textbf{D}^{-1}(\textbf{L}+\textbf{V}) =
		\begin{bmatrix} 
			0 & -\frac{\sigma_2 \rho_{12}}{2\sigma_1} & \cdots & -\frac{\sigma_n\rho_{1n}}{2\sigma_1} \\ 
			-\frac{\sigma_1\rho_{21}}{2\sigma_2} & 0 & \ddots & \vdots \\ 
			\vdots & \ddots & \ddots & -\frac{\sigma_n\rho_{n-1,n}}{2\sigma_{n-1}}\\ 
			-\frac{\sigma_1\rho_{n1}}{2\sigma_n} & \cdots & -\frac{\sigma_{n-1}\rho_{n,n-1}}{2\sigma_n} & 0
		\end{bmatrix}.
	\end{align*}

	In the following, we summarize the Lemma 7.18, Theorems 7.19 and 7.21 from \cite{Burden}. These theorems will be utilized for the rest of this section. 
	
	\begin{theorem} \label{7.21}
		If $ \textbf{A} $ is strictly diagonally dominant, then for any choice of $ \textbf{x}^{(0)} $, both the Jacobi and Gauss-Seidel methods give the  sequence of vectors $ \{ \textbf{x}^{(k)} \}^\infty_{k=0} $ that converge to the unique solution of linear system $  \textbf{A}\textbf{x} = \textbf{b} $.
	\end{theorem}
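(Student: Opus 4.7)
The plan is to reduce convergence of each method to a spectral-radius condition on its iteration matrix, and then exploit strict diagonal dominance to verify that condition.

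First I would set up the iterative framework. Writing $\textbf{A} = \textbf{D} - \textbf{L} - \textbf{V}$ with $\textbf{D}$ diagonal and $\textbf{L}, \textbf{V}$ the strictly lower and upper triangular parts of $-\textbf{A}$ (matching the notation of the excerpt), the Jacobi recursion reads $\textbf{x}^{(k+1)} = \textbf{T}_J \textbf{x}^{(k)} + \textbf{D}^{-1}\textbf{b}$ with $\textbf{T}_J = \textbf{D}^{-1}(\textbf{L}+\textbf{V})$, and Gauss--Seidel reads $\textbf{x}^{(k+1)} = \textbf{T}_{GS}\textbf{x}^{(k)} + (\textbf{D}-\textbf{L})^{-1}\textbf{b}$ with $\textbf{T}_{GS} = (\textbf{D}-\textbf{L})^{-1}\textbf{V}$. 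In each case the solution $\textbf{x}$ of $\textbf{A}\textbf{x}=\textbf{b}$ is the unique fixed point, so the error satisfies $\textbf{x}^{(k)} - \textbf{x} = \textbf{T}^k(\textbf{x}^{(0)} - \textbf{x})$, and convergence from every initial guess is equivalent to $\rho(\textbf{T}) < 1$. This equivalence is precisely the content of Lemma 7.18 and Theorem 7.19 of \cite{Burden}, which I would simply quote.

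For the Jacobi matrix I would bound the infinity norm directly. The $(i,j)$-entry of $\textbf{T}_J$ equals $-a_{ij}/a_{ii}$ for $j \neq i$ and $0$ on the diagonal, so $\|\textbf{T}_J\|_\infty = \max_i \sum_{j \neq i} |a_{ij}|/|a_{ii}|$, which is strictly less than $1$ by strict diagonal dominance of $\textbf{A}$; since $\rho(\textbf{T}_J) \leq \|\textbf{T}_J\|_\infty$, the Jacobi half is done. For Gauss--Seidel the argument has to be spectral rather than norm-based. Take an eigenpair $(\lambda, \textbf{v})$ of $\textbf{T}_{GS}$, so $\textbf{V}\textbf{v} = \lambda(\textbf{D}-\textbf{L})\textbf{v}$. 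Normalize so that $\|\textbf{v}\|_\infty = |v_m| = 1$ for some index $m$, and read off the $m$-th scalar equation to obtain
\[
|\lambda|\left(|a_{mm}| - \sum_{j<m}|a_{mj}|\right) \leq \sum_{j>m}|a_{mj}|.
\]
Strict diagonal dominance keeps the factor in parentheses strictly positive, and then comparing the right-hand side against $|a_{mm}| - \sum_{j<m}|a_{mj}|$ via the inequality $\sum_{j \neq m}|a_{mj}| < |a_{mm}|$ forces $|\lambda| < 1$, giving $\rho(\textbf{T}_{GS}) < 1$.

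The main obstacle is the Gauss--Seidel case, because its iteration matrix $(\textbf{D}-\textbf{L})^{-1}\textbf{V}$ is not amenable to a clean $\|\cdot\|_\infty$ estimate: the inverse of $\textbf{D}-\textbf{L}$ is generally dense, so one is forced to argue one eigenvalue at a time and exploit the triangular structure by choosing $m$ to be the index where $|v_m|$ is maximal. The remaining ingredients---the characterization of convergence by $\rho(\textbf{T})<1$ and the norm-based Jacobi bound---are routine and would be cited directly from \cite{Burden} rather than redone.
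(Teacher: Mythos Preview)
Your argument is correct and is essentially the standard textbook proof: Jacobi via the $\ell^\infty$ bound on $\textbf{T}_J$, Gauss--Seidel via the eigenvalue inequality obtained by reading off the row where $|v_m|$ is maximal. One minor remark: in the Gauss--Seidel step you should note that $|a_{mm}| - \sum_{j<m}|a_{mj}| > 0$ is being used twice---once to divide through and once to compare with $\sum_{j>m}|a_{mj}|$---and both follow from strict diagonal dominance, as you do observe.

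That said, the paper does not actually prove this theorem. It is explicitly introduced as a summary of results quoted from Burden and Faires, \emph{Numerical Analysis} (Lemma 7.18, Theorems 7.19 and 7.21 there), and is stated without proof alongside the companion results on spectral radius and Neumann series. So there is no ``paper's own proof'' to compare against; you have supplied a self-contained argument where the authors were content to cite the reference.
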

	\begin{theorem} \label{7.19}
		For any vector $ \textbf{x}^{(0)} \in \mathbb{R}^n $, the sequence of vectors $ \{\textbf{x}^{(k)}\}^\infty_{k=0} $ deﬁned by
		\begin{align*}
			\textbf{x}^{(k)} = \textbf{T} \textbf{x}^{(k-1)} + \textbf{c}, \text{ for each } k \geq 1,
		\end{align*}
		converges to the unique solution of $ \textbf{x} = \textbf{T} \textbf{x} + \textbf{c} $ if and only if $ \rho(\textbf{T}) < 1 $.
	\end{theorem}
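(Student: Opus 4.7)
The plan is to recast the question about convergence of the iteration as a question about the asymptotic behavior of powers of $\textbf{T}$, and then invoke the standard spectral characterization $\textbf{T}^k \to \textbf{0}$ if and only if $\rho(\textbf{T}) < 1$. First I would address existence and uniqueness of the fixed point. The equation $\textbf{x} = \textbf{T}\textbf{x} + \textbf{c}$ is equivalent to $(\textbf{I} - \textbf{T})\textbf{x} = \textbf{c}$, and since $\rho(\textbf{T}) < 1$ implies that $1$ is not an eigenvalue of $\textbf{T}$, the matrix $\textbf{I} - \textbf{T}$ is invertible, so a unique fixed point $\textbf{x}^*$ exists. (For the reverse direction, one only needs a single initial vector for which the iteration converges, so uniqueness of the limit is automatic once established.)

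Next, setting $\textbf{e}^{(k)} = \textbf{x}^{(k)} - \textbf{x}^*$ and subtracting $\textbf{x}^* = \textbf{T}\textbf{x}^* + \textbf{c}$ from the iteration gives $\textbf{e}^{(k)} = \textbf{T}\textbf{e}^{(k-1)} = \cdots = \textbf{T}^{k}\textbf{e}^{(0)}$. Hence convergence of $\textbf{x}^{(k)}$ to $\textbf{x}^*$ for every initial $\textbf{x}^{(0)}$ is equivalent to $\textbf{T}^{k}\textbf{v} \to \textbf{0}$ for every $\textbf{v} \in \mathbb{R}^n$, which in turn is equivalent to $\textbf{T}^{k} \to \textbf{0}$ in the operator norm. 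This reduces the theorem to a purely matrix-analytic statement about $\textbf{T}$.

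For the forward direction, I would invoke the standard Householder lemma: if $\rho(\textbf{T}) < 1$, then for any $\epsilon > 0$ with $\rho(\textbf{T}) + \epsilon < 1$ there is a submultiplicative norm $\|\cdot\|_{\epsilon}$ on $\mathbb{R}^{n \times n}$ with $\|\textbf{T}\|_{\epsilon} \leq \rho(\textbf{T}) + \epsilon$; this is obtained by similarity-transforming $\textbf{T}$ into a near-diagonal Jordan form with superdiagonal $\epsilon$'s and using the induced $\ell^\infty$ norm. Submultiplicativity then gives $\|\textbf{T}^{k}\|_{\epsilon} \leq (\rho(\textbf{T}) + \epsilon)^{k} \to 0$, and equivalence of norms on a finite-dimensional space yields $\textbf{T}^{k} \to \textbf{0}$. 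For the reverse (contrapositive) direction, suppose $\rho(\textbf{T}) \geq 1$, pick an eigenvalue $\lambda$ with $|\lambda| \geq 1$ and a corresponding (possibly complex) eigenvector $\textbf{v}$; choosing $\textbf{x}^{(0)} = \textbf{x}^* + \mathrm{Re}(\textbf{v})$ yields $\|\textbf{e}^{(k)}\| = |\lambda|^{k}\|\mathrm{Re}(\textbf{v})\| \not\to 0$ (after separating real and imaginary parts when $\lambda \notin \mathbb{R}$), so the iteration cannot converge for this choice.

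The main obstacle is the Jordan-form/Householder input used in the forward direction: everything else is algebraic bookkeeping, but the existence of a matrix norm arbitrarily close to the spectral radius is the nontrivial ingredient. Since this is a classical textbook result (as the paper notes, it is quoted directly from Burden), I would ultimately cite it rather than reprove it in the body of the paper.
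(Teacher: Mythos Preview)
The paper does not prove this theorem at all: it is quoted verbatim as Theorem~7.19 from Burden and Faires, with the explicit preface ``we summarize the Lemma 7.18, Theorems 7.19 and 7.21 from \cite{Burden}.'' You recognize this yourself in your final paragraph, and your conclusion---to cite rather than reprove---matches exactly what the paper does.

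Your sketch is the standard textbook argument and is essentially correct, but two small points in the reverse direction deserve a note if you were to write it out in full. First, when $\rho(\textbf{T}) \geq 1$ it may happen that $1$ itself is an eigenvalue, in which case $\textbf{I}-\textbf{T}$ is singular and no unique fixed point $\textbf{x}^*$ exists; your construction $\textbf{x}^{(0)} = \textbf{x}^* + \mathrm{Re}(\textbf{v})$ presupposes such an $\textbf{x}^*$, so that case needs a one-line separate treatment (no unique solution, hence the biconditional fails trivially). Second, the formula $\|\textbf{e}^{(k)}\| = |\lambda|^{k}\|\mathrm{Re}(\textbf{v})\|$ is not literally correct for complex $\lambda$: one has $\textbf{T}^{k}\mathrm{Re}(\textbf{v}) = \mathrm{Re}(\lambda^{k}\textbf{v})$, and one then argues that $\mathrm{Re}(\textbf{v})$ and $\mathrm{Im}(\textbf{v})$ are linearly independent over $\mathbb{R}$ (else $\lambda$ would be real) so that $\|\mathrm{Re}(\lambda^{k}\textbf{v})\|$ is bounded below by a positive constant times $|\lambda|^{k}$. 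Your parenthetical ``after separating real and imaginary parts'' gestures at this, but the displayed equality overstates it.
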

	\begin{theorem} \label{7.18}
		If the spectral radius satisﬁes $ \rho(\textbf{T}) < 1 $, then $ (\textbf{I} - \textbf{T})^{-1} $ exists, and
		\begin{align*}
			(\textbf{I} - \textbf{T})^{-1} = \textbf{I} + \textbf{T} + \textbf{T}^2 + \cdots = \sum_{j=0}^{\infty} \textbf{T}^j.
		\end{align*}
	\end{theorem}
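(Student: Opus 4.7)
The plan is to establish two complementary facts: (i) that $\textbf{I} - \textbf{T}$ is invertible, and (ii) that the Neumann series $\sum_{j=0}^\infty \textbf{T}^j$ converges to this inverse. Both conclusions will be driven by the hypothesis $\rho(\textbf{T}) < 1$, and since $\textbf{T}$ is a finite-dimensional matrix, standard linear-algebraic tools apply.

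For part (i), I would argue at the level of the spectrum. The eigenvalues of $\textbf{I} - \textbf{T}$ are precisely $1 - \lambda$ as $\lambda$ ranges over the eigenvalues of $\textbf{T}$. Since $|\lambda| \leq \rho(\textbf{T}) < 1$, none of the numbers $1 - \lambda$ is zero, so $\det(\textbf{I} - \textbf{T}) = \prod_i (1 - \lambda_i) \neq 0$ and $(\textbf{I} - \textbf{T})^{-1}$ exists.

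For part (ii), I would introduce the partial sums $\textbf{S}_N = \sum_{j=0}^N \textbf{T}^j$ and exploit the telescoping identity $(\textbf{I} - \textbf{T})\,\textbf{S}_N = \textbf{I} - \textbf{T}^{N+1}$, which follows directly from expanding the left-hand side. Provided one can prove $\textbf{T}^{N+1} \to 0$ as $N \to \infty$, multiplying both sides on the left by the inverse built in part (i) yields $\textbf{S}_N \to (\textbf{I} - \textbf{T})^{-1}$, which is exactly the series representation in the claim.

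The main obstacle, and the only nontrivial step, is to deduce $\textbf{T}^N \to 0$ from the spectral-radius condition. The cleanest route is to invoke Gelfand's formula $\rho(\textbf{T}) = \lim_{N \to \infty} \|\textbf{T}^N\|^{1/N}$, valid in any submultiplicative matrix norm: fixing $r$ with $\rho(\textbf{T}) < r < 1$, for all sufficiently large $N$ we have $\|\textbf{T}^N\|^{1/N} < r$, hence $\|\textbf{T}^N\| \leq r^N \to 0$. A concrete alternative uses the Jordan decomposition $\textbf{T} = \textbf{P}\textbf{J}\textbf{P}^{-1}$: the entries of $\textbf{J}^N$ have the form $\binom{N}{k}\lambda^{N-k}$ with $|\lambda| < 1$, which decay to zero, and conjugation by the fixed matrix $\textbf{P}$ preserves this decay. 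Once $\textbf{T}^N \to 0$ is in hand, the argument in the previous paragraph closes the proof; all remaining steps are routine limit-passing in a finite-dimensional normed space.
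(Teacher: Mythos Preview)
Your argument is correct and is the standard textbook proof of this classical fact: the spectral computation gives invertibility of $\textbf{I}-\textbf{T}$, the telescoping identity $(\textbf{I}-\textbf{T})\textbf{S}_N = \textbf{I}-\textbf{T}^{N+1}$ reduces the series claim to $\textbf{T}^{N}\to 0$, and either Gelfand's formula or the Jordan form supplies that last ingredient.

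There is nothing to compare against in the paper itself: the authors do not prove this statement but simply quote it (together with Theorems~\ref{7.21} and~\ref{7.19}) from Burden and Faires, \emph{Numerical Analysis}, where it appears as Lemma~7.18. Your write-up therefore supplies more than the paper does; any of the routes you sketch for $\textbf{T}^N\to 0$ would be acceptable, with the Gelfand-formula argument being the most economical.
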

	Under Assumption \ref{Conv}, $ \textbf{U} $ is strictly diagonally dominant. 
	Then, Theorem \ref{7.21} implies the sequence $ \big( \boldsymbol{\widehat{\pi}}^{(l)} \big)_{l=0}^\infty $ given by Jacobi method converges to the unique solution of $ \textbf{U} \boldsymbol{\widehat{\pi}} = \boldsymbol{v} $, which is equivalent to the linear system $ \boldsymbol{\widehat{\pi}} = \textbf{T} \boldsymbol{\widehat{\pi}} + \textbf{D}^{-1} \boldsymbol{v} $. 
	Next, Theorem \ref{7.19} implies the spectral radius satisfies $ \rho(\textbf{T}) <1 $. 
	After that, Theorem \ref{7.18} implies that $ (\textbf{I} - \textbf{T})^{-1} $ exists and $ (\textbf{I} - \textbf{T})^{-1} = \textbf{I} + \textbf{T}  +O(\textbf{T}^2) $, where $ \textbf{I} $ is identity matrix. Hence
	\begin{align*}
		\widehat{\boldsymbol{\pi}} 
		& = (\textbf{I} - \textbf{T})^{-1} \textbf{D}^{-1} \boldsymbol{v} \\
		& = (\textbf{I} + \textbf{T}) \textbf{D}^{-1} \boldsymbol{v} \\
		& = 
		\begin{bmatrix} 
			1 & -\frac{\sigma_2 \rho_{12}}{2\sigma_1} & \cdots & -\frac{\sigma_n\rho_{1n}}{2\sigma_1} \\ 
			-\frac{\sigma_1\rho_{21}}{2\sigma_2} & 1 & \ddots & \vdots \\ 
			\vdots & \ddots & \ddots & -\frac{\sigma_n\rho_{n-1,n}}{2\sigma_{n-1}}\\ 
			-\frac{\sigma_1\rho_{n1}}{2\sigma_n} & \cdots & -\frac{\sigma_{n-1}\rho_{n,n-1}}{2\sigma_n} & 1
		\end{bmatrix}
		\begin{bmatrix} 
			\frac{v_1}{u_{11}} \\ 
			\vdots \\ 
			\vdots \\ 
			\frac{v_n}{u_{nn}}
		\end{bmatrix}.
	\end{align*}
	It follows that, for $ i = 1, \dots, n $, the optimal portfolios are:
	\begin{align} \label{pi_hat}
		\widehat{\pi}_i 
		& = \frac{v_i}{u_{ii}} 	
		- \frac{1}{2\sigma_i u_{ii}} \sum_{\substack{k=1 \\ k \neq i}}^n \sigma_k \rho_{ik} v_k
		\nonumber  \\
		& = \frac{-\lambda_iU_{x_i}-a \omega_i U_{x_iy}}{\sigma_i U_{x_ix_i}}
		+ \frac{1}{2\sigma_i U_{x_ix_i}} \sum_{\substack{k=1 \\ k \neq i}}^n 
		\rho_{ik} \Big[\lambda_kU_{x_k}+a \omega_k U_{x_ky}\Big]
		\nonumber \\
		& = \frac{1}{2\sigma_i U_{x_ix_i}} 
		\Bigg[
		-3\Big[\lambda_iU_{x_i}+a \omega_i U_{x_iy}\Big]
		+  \sum_{k=1}^n 
		\rho_{ik} \Big[\lambda_kU_{x_k}+a \omega_k U_{x_ky}\Big]
		\Bigg].
	\end{align}
	Finally, after substituting \eqref{pi_hat} into \eqref{HJB_0}, we obtain the following HJB equation
	\begin{align} \label{HJB}
		\left\{
			\begin{array}{ll}
				U_t + \mathcal{H}(U)  =0, & \text{for } (t,\boldsymbol{x},y) \in [0,T]\times (\mathbb{R}^+)^n \times \mathbb{R},\\
				U(T,\boldsymbol{x},y) = U_T(\boldsymbol{x}), & \text{for } (\boldsymbol{x},y) \in (\mathbb{R}^+)^n \times \mathbb{R}, \\
			\end{array}
		\right.
	\end{align}
	where
	\begin{align*}
		\mathcal{H}(U) =
		\sum_{i=1}^{n}  \sigma_i\widehat{\pi}_i \lambda_i U_{x_i} + \frac{1}{2} \sum_{i,j=1}^{n} \rho_{ij} \sigma_i \sigma_j\widehat{\pi}_i \widehat{\pi}_j U_{x_ix_j} + a\sum_{i=1}^{n} \sigma_i \widehat{\pi}_i \omega_i  U_{x_iy}
		+ bU_y + \frac{1}{2}a^2U_{yy}.
	\end{align*}

	\section{Main result}
\label{sec3}

	In this section, we construct classical super-solution and sub-solution to HJB equation \eqref{HJB} using the second order expansion of utility function in powers of the time to horizon $ T - t $.
	We approximate the value function defined by \eqref{J} using the first order terms of expansion in power of the time to horizon $ T - t $. 
	We then control the error of this approximation by the second order terms of expansion in powers of the square time to horizon $ (T - t)^2 $.
	We also prove that value function lies between constructed super-solution and sub-solution using martingale inequalities. 

	\begin{theorem} \label{thm}
		Let $ \widehat{U}(t,\boldsymbol{x},y) $ be the first-order approximate solution of the HJB equation \eqref{HJB}. Then
		\begin{align} \label{U_hat}
			\widehat{U}(t,\boldsymbol{x},y) = U_T(\boldsymbol{x}) + (T-t)U^{(1)}(\boldsymbol{x},y),
		\end{align}
		where
		\begin{align} \label{U^1}
			U^{(1)}(\boldsymbol{x},y) = \
			& \frac{1}{2}\sum_{i=1}^{n} \lambda_i(y)
			\Big[ \boldsymbol{\rho}_{i} \boldsymbol{\lambda}(y) -3\lambda_i(y) \Big] 
			\frac{\big[\big(U_T(\boldsymbol{x})\big)_{x_i}\big]^2}{\big(U_T(\boldsymbol{x})\big)_{x_ix_i}} \nonumber \\
			& + \frac{1}{8} \sum_{i,j=1}^{n} \rho_{ij}
			\Big[ \boldsymbol{\rho}_{i} \boldsymbol{\lambda}(y) -3\lambda_i(y) \Big]
			\Big[ \boldsymbol{\rho}_{j} \boldsymbol{\lambda}(y) -3\lambda_j(y) \Big]
			\frac{\big[\big(U_T(\boldsymbol{x})\big)_{x_i}\big]^2}{\big(U_T(\boldsymbol{x})\big)_{x_ix_i}}
		\end{align}
		where $ \boldsymbol{\rho}_{i} = (\rho_{i1}, \cdots, \rho_{in}) $ is the row vector of constant correlation matrix $ \big(\rho_{ij}\big)_{n\times n} $ of $ \boldsymbol{W}(t) $.

		Moreover, there exists constants $ c > 0 $ and $ 0 < \varepsilon < \min\{1, T\} $ such that
		\begin{align} \label{ineq}
			\big| J(t,\boldsymbol{x},y) - \widehat{U}(t,\boldsymbol{x},y) \big| \leq c (T-t)^2 f(\boldsymbol{x}) \quad \text{for }  (t,\boldsymbol{x},y) \in (T-\varepsilon,T)\times (\mathbb{R}^+)^n \times \mathbb{R},
		\end{align}
		where $ J(t,\boldsymbol{x},y) $ is value function defined by \eqref{J}; $ f(\boldsymbol{x}) = 1 $ under Case 1 of Assumption \ref{Assume_UT}, and $ f(\boldsymbol{x}) = \|\boldsymbol{x}\|^{1-\alpha} + \|\boldsymbol{x}\|^{1-\beta} $ under Case 2 of Assumption \ref{Assume_UT}; the constants $ c $ and $ \varepsilon$ are independent of $ t $, $ \boldsymbol{x} $ and $ y $.

	\end{theorem}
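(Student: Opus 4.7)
The plan splits into two parts: first, identify $U^{(1)}$ by formal expansion of the HJB equation near $t=T$; second, install a classical super/sub-solution sandwich to upgrade the formal expansion into the error bound \eqref{ineq}.

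For the identification, I would substitute the ansatz $U(t,\boldsymbol{x},y)=U_T(\boldsymbol{x})+(T-t)U^{(1)}(\boldsymbol{x},y)+O((T-t)^2)$ into \eqref{HJB}. Because $U_T$ depends on $\boldsymbol{x}$ only through $\|\boldsymbol{x}\|$, one has $(U_T)_{x_i}=U_T'(\|\boldsymbol{x}\|)$ and $(U_T)_{x_ix_j}=U_T''(\|\boldsymbol{x}\|)$ for all $i,j$, while every $y$-derivative of $U_T$ vanishes. Plugging these into the closed form \eqref{pi_hat} collapses each ratio $(U_T)_{x_ix_k}/(U_T)_{x_kx_k}$ to $1$ and reduces $\sum_k \rho_{ik}\lambda_k(U_T)_{x_k}$ to $(U_T)_{x_i}\,\boldsymbol{\rho}_i\boldsymbol{\lambda}(y)$, giving
\[
\widehat{\pi}_i\big|_{t=T}=\frac{(U_T)_{x_i}\big[\boldsymbol{\rho}_i\boldsymbol{\lambda}(y)-3\lambda_i(y)\big]}{2\sigma_i(y)\,(U_T)_{x_ix_i}}.
\]
Matching the $O(1)$ terms (with $U_t(T,\boldsymbol{x},y)=-U^{(1)}(\boldsymbol{x},y)$) and inserting this portfolio back into $\mathcal{H}(U_T)$ reproduces exactly the two sums in \eqref{U^1}.

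For the error bound, I would define
\[
U^{\pm}(t,\boldsymbol{x},y):=\widehat{U}(t,\boldsymbol{x},y)\pm c(T-t)^2 f(\boldsymbol{x})
\]
and choose $c>0$ large, $\varepsilon\in(0,\min\{1,T\})$ small, so that $U^+$ is a classical super-solution and $U^-$ a classical sub-solution of \eqref{HJB} on $(T-\varepsilon,T)\times(\mathbb{R}^+)^n\times\mathbb{R}$, both sharing the terminal value $U_T$. Substituting $U^\pm$ into $U_t+\mathcal{H}(\cdot)$ and Taylor-expanding in $T-t$, the $O(1)$ contribution cancels by the previous step, while the time derivative of the correction supplies a dominant $\mp 2c(T-t)f(\boldsymbol{x})$; choosing $c$ large enough that this term controls all other $O(T-t)$ residuals — using the boundedness of $a,b,\lambda_i$ and their derivatives from Assumption \ref{aiai} and the explicit scaling of derivatives of $U_T$ from Assumption \ref{Assume_UT} — gives the desired inequalities. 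With the sandwich in place, \eqref{ineq} follows from a standard verification argument: for any admissible $\boldsymbol{\pi}$, Itô's formula applied to $U^+(s,\boldsymbol{X}(s),Y(s))$ on $[t,T]$, together with the super-solution inequality and localization, yields $J\leq U^+$ after taking expectation and supremum; the reverse bound $J\geq U^-$ is obtained by running Itô's formula along the near-optimal portfolio $\widehat{\pi}_i|_{t=T}$, whose admissibility in the sense of Definition \ref{def_1} is guaranteed by Assumptions \ref{aiai} and \ref{Assume_UT}.

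The hardest step, I expect, is making the super/sub-solution property hold uniformly in $(\boldsymbol{x},y)$ with a single constant $c$. The residual involves derivatives of the quotients $[(U_T)_{x_i}]^2/(U_T)_{x_ix_i}$ and $(U_T)_{x_i}(U_T)_{x_j}(U_T)_{x_ix_j}/[(U_T)_{x_ix_i}(U_T)_{x_jx_j}]$ appearing in $U^{(1)}$, and these must be bounded by a constant multiple of $f(\boldsymbol{x})$ both as $\|\boldsymbol{x}\|\to 0$ and as $\|\boldsymbol{x}\|\to\infty$. This is precisely why Assumption \ref{Assume_UT} restricts $U_T$ to the logarithmic form or a mixture of power functions — those are exactly the utilities for which all the quotients above and their derivatives display the explicit $f$-scaling needed to absorb the residual into $c(T-t)f(\boldsymbol{x})$.
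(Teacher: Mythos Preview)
Your proposal is correct and follows essentially the same two-step architecture as the paper: formal expansion to identify $U^{(1)}$, then a super/sub-solution sandwich verified via It\^o's formula with localization and dominated convergence. The paper's proof differs only in that it carries the expansion one order further, computing $U^{(2)}$ explicitly (a lengthy calculation) and then verifying term-by-term (Lemma~\ref{A.0}) that each summand of $U^{(2)}$ scales like $f(\boldsymbol{x})$; this explicit $U^{(2)}$ is what the paper uses to build the coefficient $u^{(2)}(\boldsymbol{x},y)$ in its super/sub-solutions, whereas you absorb everything into a single constant $c$. For the lower bound the paper uses the maximizing portfolio associated with the sub-solution $\underline{U}$ itself (the ``mirror'' of the upper-bound argument), rather than the frozen portfolio $\widehat{\pi}_i|_{t=T}$ that you propose; both choices work, but the paper's choice makes the drift exactly equal to $\underline{U}_t+\mathcal{H}(\underline{U})>0$ without a further perturbation estimate.
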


We provide a sketch of proof of Theorem \ref{thm}. The detailed proof is provided in Appendix \ref{apendixA1}. 
\begin{proof}[Sketch of proof]
We prove this theorem in two steps. In the first step, we construct a super-solution and a sub-solutions to HJB equation \eqref{HJB}. For this, at first \eqref{U^1} is derived. After that, an expression is computed for $ U^{(2)} $, 
	which is the coefficient of the second order term of the power expansion of utility function in the time to horizon $ T - t $. It is shown that all the terms of $ U^{(2)} $ are of the same order as $ f(\boldsymbol{x}) $. The details of the notation is provided in the detailed proof in Appendix \ref{apendixA1}. 
After this, we define the super-solution and the sub-solution. Based on those, we show that the coefficients of $ (T-t) $-terms are of the order of $ \tilde{f}(\boldsymbol{x}) $, for an appropriate $ \tilde{f}(\boldsymbol{x}) $. It is also shown that the coefficients are bounded. After that, combining all these results, it is proved that the super-solution and the sub-solution are the classical super-solution/sub-solution of HJB equation \eqref{HJB}.

For the second step,  we prove $ \underline{U}(t,\boldsymbol{x},y) \leq J(t,\boldsymbol{x},y) \leq \overline{U}(t,\boldsymbol{x},y) $ (for the notations, we once again refer to the detailed proof in Appendix \ref{apendixA1}). To do this, at first a sequence of measurable and integrable real functions $ \big\{\overline{U}\big(t_n, \boldsymbol{X}(t_n), Y(t_n)\big)\big\}_{n=1}^\infty $, are constructed. Then a related inequality is proved, and the assumptions of the dominated convergence theorem are verified. After that the dominated convergence theorem is used to prove an asymptotic property to the constructed sequence of functions. Combining all these results, it is proved that $ J(t,\boldsymbol{x},y) \leq \overline{U}(t,\boldsymbol{x},y) $. Similarly, we prove $ \underline{U}(t,\boldsymbol{x},y) \leq J(t,\boldsymbol{x},y) $. From all these results, \eqref{ineq} is established.
\end{proof}

We conclude this section with a special case. 
If $ n = 1 $, then the first-order approximate solution of HJB equation \eqref{HJB} in Theorem \eqref{thm} is reduced as 
\begin{align*}
	\widehat{U}(t,x,y) = U_T(x) - (T-t)\frac{\lambda^2(y)\big(U'_T(x)\big)^2}{2U''_T(x)}.
\end{align*}
We also control the error of this reduced approximation by the following: there exists constants $ c > 0 $ and $ 0 < \varepsilon < \min\{1, T\} $ such that
\begin{align*}
	\big| J(t,x,y) - \widehat{U}(t,x,y) \big| \leq c (T-t)^2 f(x) \quad \text{for }  (t,x,y) \in (T-\varepsilon,T)\times \mathbb{R}^+ \times \mathbb{R},
\end{align*}
where $ J(t,x,y) $ is value function defined by \eqref{J}; $ f(x) = 1 $ under Case 1 of Assumption \ref{Assume_UT}, and $ f(x) = x^{1-\alpha} + x^{1-\beta} $ under Case 2 of Assumption \ref{Assume_UT}; the constants $ c $ and $ \varepsilon$ are independent of $ t $, $ x $ and $ y $.

We note the above reduction is the same as the primary result in \cite{Kumar}. Hence our $ n $-dimensional model includes all of the advantages of its corresponding $ 1 $-dimensional model in \cite{Kumar}.

\section{Approximating portfolio}

\subsection{Generating the approximating portfolio}
\label{sec4}

In this subsection, we generate a close-to-optimal portfolio near the time to horizon $ (T-t) $ by substituting the first-order approximation of utility function \eqref{U_hat} into the optimal portfolio \eqref{pi_hat}. 
In the following lemma, we also prove that the generated portfolio yields an expected utility function close to the maximum expected utility, and  the error is controlled by the square of the time to the horizon $ (T-t)^2 $.
In the financial context, this will provide a practical way to obtain an optimal portfolio, with control on the errors. 

	\begin{theorem}
		Let $ \tau \in [t, T] $. Denote $ \boldsymbol{\hat{X}}(\tau) = \big(\widehat{X}_1(\tau),\dots,\widehat{X}_n(\tau)\big) $. For $ i = 1,\dots,n $, the wealth $ \widehat{X}_i(\tau) $ is 
		\begin{align*}
			d\widehat{X}_i(\tau) = \sigma_i\big(Y(\tau)\big) \widetilde{\pi}_i\big(\tau,\boldsymbol{\hat{X}}(\tau),Y(\tau)\big) \Big[ \lambda_i\big(Y(\tau)\big)dt + dW_i(\tau) \Big],
		\end{align*}
		where the optimal portfolio $ \widetilde{\pi}_i\big(\tau,\boldsymbol{\hat{X}}(\tau),Y(\tau)\big) $ is given by
		\begin{align*}
			& \widetilde{\pi}_i\big(\tau, \boldsymbol{\hat{X}}(\tau), Y(\tau)\big) = 
			\frac{1}{2\sigma_i\big(Y(\tau)\big) \widehat{U}_{\widehat{X}_i\widehat{X}_i}\big(\tau,\boldsymbol{\hat{X}}(\tau),Y(\tau)\big)} \cdot\\
			& \cdot
			\Bigg[
			-3\Big[\lambda_i\big(Y(\tau)\big) \widehat{U}_{\widehat{X}_i}\big(\tau,\boldsymbol{\hat{X}}(\tau),Y(\tau)\big) + \omega_i a\big(Y(\tau)\big)  \widehat{U}_{\widehat{X}_iY}\big(\tau,\boldsymbol{\hat{X}}(\tau),Y(\tau)\big)\Big] \\
			& \quad \
			+  \sum_{k=1}^n 
			\rho_{ik} \Big[\lambda_k\big(Y(\tau)\big) \widehat{U}_{\widehat{X}_k}\big(\tau,\boldsymbol{\hat{X}}(\tau),Y(\tau)\big) + \omega_k a(Y(\tau)\big)  \widehat{U}_{\widehat{X}_kY}\big(t,\boldsymbol{\hat{X}}(\tau),Y(\tau)\big)\Big]
			\Bigg]
		\end{align*}
		where the utility function $ \widehat{U} $ is given by approximation \eqref{U_hat}. 

		Moreover, there exists constant $ c > 0 $ and $ 0 < \varepsilon < \min\{1, T\} $ such that
		\begin{multline*}
			\bigg| J(t,\boldsymbol{x},y) - \mathbb{E}\Big( U_T\big(\boldsymbol{\hat{X}}(T)\big) \Big| \boldsymbol{\hat{X}}(t) = \boldsymbol{x}, Y(t) = y \Big) \bigg| \leq c (T-t)^2f(\boldsymbol{x}) \\ 
			\text{for }  (t,\boldsymbol{x},y) \in (T-\varepsilon,T)\times (\mathbb{R}^+)^n \times \mathbb{R},
		\end{multline*}
		where $ J(t,\boldsymbol{x},y) $ is value function defined by \eqref{J}; $ f(\boldsymbol{x}) = 1 $ under Case 1 of Assumption \ref{Assume_UT}, and $ f(\boldsymbol{x}) = \|\boldsymbol{x}\|^{1-\alpha} + \|\boldsymbol{x}\|^{1-\beta} $ under Case 2 of Assumption \ref{Assume_UT}; the constants $ c $ and $\varepsilon$ are independent of $ t $, $ \boldsymbol{x} $ and $ y $.
	\end{theorem}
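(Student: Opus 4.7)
My plan is to use Theorem \ref{thm} together with the triangle inequality to reduce the problem to comparing the expected terminal utility along the approximating portfolio with the first-order approximation $\widehat{U}(t,\boldsymbol{x},y)$. More precisely,
\begin{align*}
\bigl| J(t,\boldsymbol{x},y) - \mathbb{E}\bigl(U_T(\boldsymbol{\hat{X}}(T))\bigr)\bigr| \leq \bigl| J(t,\boldsymbol{x},y) - \widehat{U}(t,\boldsymbol{x},y)\bigr| + \bigl| \widehat{U}(t,\boldsymbol{x},y) - \mathbb{E}\bigl(U_T(\boldsymbol{\hat{X}}(T))\bigr)\bigr|,
\end{align*}
and the first summand is already bounded by $c(T-t)^2 f(\boldsymbol{x})$ by Theorem \ref{thm}. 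Hence everything reduces to bounding the second summand by the same quantity.

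To estimate this second summand, I would first verify that $\widetilde{\pi}$ is admissible in the sense of Definition \ref{def_1}; this follows from the explicit form $\widehat{U} = U_T + (T-t)U^{(1)}$, the smoothness imposed by Assumption \ref{Assume_UT}, and the boundedness stipulated in Assumption \ref{aiai}. Then I apply It\^o's formula to $s \mapsto \widehat{U}\bigl(s,\boldsymbol{\hat{X}}(s),Y(s)\bigr)$ on $[t,T]$. After localization and taking expectations (using $\widehat{U}(T,\cdot,\cdot) = U_T$) the martingale part vanishes and one obtains
\begin{align*}
\mathbb{E}\bigl(U_T(\boldsymbol{\hat{X}}(T))\bigr) - \widehat{U}(t,\boldsymbol{x},y) = \mathbb{E}\int_t^T \bigl[\widehat{U}_s + \mathcal{L}^{\widetilde{\pi}}\widehat{U}\bigr]\bigl(s,\boldsymbol{\hat{X}}(s),Y(s)\bigr)\, ds,
\end{align*}
where $\mathcal{L}^{\widetilde{\pi}}$ is the infinitesimal generator of $(\boldsymbol{\hat{X}},Y)$ under the portfolio $\widetilde{\pi}$. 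Since $\widetilde{\pi}$ is obtained precisely by substituting $\widehat{U}$ into the first-order optimality formula \eqref{pi_hat}, one has $\mathcal{L}^{\widetilde{\pi}}\widehat{U} = \mathcal{H}(\widehat{U})$, so the integrand is the HJB residual $R(s,\boldsymbol{x},y) := \widehat{U}_s + \mathcal{H}(\widehat{U})$.

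The key calculation is the estimate of $R$. Inserting $\widehat{U} = U_T + (T-t)U^{(1)}$ into $\mathcal{H}$ and expanding, the $(T-s)^0$ terms cancel by the very definition of $U^{(1)}$ via \eqref{U^1}, leaving $R(s,\boldsymbol{x},y) = (T-s)R_1(\boldsymbol{x},y) + O((T-s)^2)$, where $R_1$ involves exactly the ratios of derivatives of $U_T$ and $U^{(1)}$ that already appear in the second-order coefficient $U^{(2)}$ used to build the super/sub-solutions in the proof of Theorem \ref{thm}. Assumptions \ref{aiai}, \ref{Assume_UT} and \ref{Conv} then provide the pointwise bound $|R(s,\boldsymbol{x},y)| \leq c(T-s) f(\boldsymbol{x})$ uniformly on some $(T-\varepsilon,T)$, and integrating over $[t,T]$ yields the desired $(T-t)^2 f(\boldsymbol{x})$ bound.

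The main obstacle will be the integrability and localization step: I must control $\mathbb{E} f(\boldsymbol{\hat{X}}(s))$ uniformly for $s\in[t,T]$, both in order to pass from the stopped expectation to the unstopped one and to justify pushing the expectation inside the residual integral. These moment estimates for $\boldsymbol{\hat{X}}$ under $\widetilde{\pi}$ are analogous to (and can be extracted from) the dominated-convergence arguments already developed in the proof of Theorem \ref{thm}; the strict diagonal dominance of Assumption \ref{Conv} is what keeps the ratios $U_{x_ix_j}/U_{x_jx_j}$ appearing in $\widetilde{\pi}$ bounded and makes the required moment estimates tractable.
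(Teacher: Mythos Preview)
Your proposal is correct and follows essentially the same route as the paper: apply It\^o's formula to $\widehat{U}$ along $(\boldsymbol{\hat{X}},Y)$, identify the drift as the HJB residual $\widehat{U}_s+\mathcal{H}(\widehat{U})$, use the definition of $U^{(1)}$ to kill the order-zero terms and obtain an $O((T-s)f)$ bound, integrate, pass to the limit via dominated convergence (the paper's Lemma~\ref{A.1}), and finish with the triangle inequality together with Theorem~\ref{thm}. One small remark: the ratios $\widehat{U}_{x_ix_k}/\widehat{U}_{x_kx_k}$ are controlled not by the diagonal dominance of Assumption~\ref{Conv} but by the $\|\boldsymbol{x}\|$-structure of $U_T$, which forces all mixed and pure second partials to coincide; Assumption~\ref{Conv} is only needed upstream to justify the Jacobi approximation that produces formula~\eqref{pi_hat}.
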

	\begin{proof}
		We denote $ \lambda_i = \lambda_i\big(Y(\tau)\big) $, $ \sigma_i = \sigma_i\big(Y(\tau)\big) $, $ b = b\big(Y(\tau)\big) $, $ a = a\big(Y(\tau)\big) $, $ \widetilde{\pi}_i = \widetilde{\pi}_i\big(\tau, \boldsymbol{\hat{X}}(\tau), Y(\tau)\big) $ and $ \widehat{U} = \widehat{U}\big(\tau, \boldsymbol{\hat{X}}(\tau), Y(\tau)\big) $. We let $ \{t_n \}_{n=1}^\infty \subset [t, T] $ be a sequence of stopping times such that $ t_n \leq t_{n+1} $ and $ t_n \rightarrow T $. 
		
		We apply the multidimensional It\^{o}'s formula to $ \widehat{U} $ and obtain
		\begin{align}
			& \widehat{U}\big(t_n, \boldsymbol{\hat{X}}(t_n), Y(t_n)\big) - \widehat{U}(t,\boldsymbol{x},y)  \nonumber\\
			& = \int_t^{t_n} \bigg[ \widehat{U}_{\tau} + \sum_{i=1}^{n}\sigma_i \widetilde{\pi}_i \lambda_i \widehat{U}_{\widehat{X}_i} + b\widehat{U}_Y
			+ \frac{1}{2} \sum_{i,j=1}^{n} \rho_{ij} \sigma_i \widetilde{\pi}_i  \sigma_j \widetilde{\pi}_j \widehat{U}_{\widehat{X}_i\widehat{X}_j} + a\sum_{i=1}^{n} \sigma_i \widetilde{\pi}_i \omega_i  \widehat{U}_{\widehat{X}_iY} +\frac{1}{2}a^2\widehat{U}_{YY} \bigg] ds  \label{HJB_5}\\
			& \quad
			+ \sum_{i=1}^{n} \int_t^{t_n} \bigg[ \sigma_i \widetilde{\pi}_i \widehat{U}_{\widehat{X}_i} + a \omega_i \widehat{U}_Y \bigg] dW_i
			+ \int_t^{t_n} \bigg[ a \Big[1-\sum_{i=1}^n \omega_i^2\Big]^{1/2} \widehat{U}_Y \bigg] dW_0. \label{mart_5}
		\end{align}
		where the stochastic integrals \eqref{mart_5} are martingales. We note that the integrand of \eqref{HJB_5} is exactly the $ \widehat{U}_\tau + \mathcal{H}(\widehat{U}) $. 
		As $ U^{(1)} $ is solved by ignoring the $ (T-t) $-terms, $ \big| \widehat{U}_\tau + \mathcal{H}(\widehat{U}) \big| = O(T-\tau)O\big(f(\boldsymbol{\hat{X}}(\tau))\big) $ under Assumption \ref{Assume_UT}.
		By the martingale property, we have
		\begin{multline*}
			\mathbb{E}\Big( \widehat{U}\big(t_n, \boldsymbol{\hat{X}}(t_n), Y(t_n)\big) \Big| \boldsymbol{\hat{X}}(t) = \boldsymbol{x}, Y(t) = y \Big) - \widehat{U}(t,\boldsymbol{x},y) \\
			= \int_t^{t_n} \mathbb{E} \Big( O(T-\tau)O\big(f\big(\boldsymbol{\hat{X}}(\tau)\big)\big) \Big| \boldsymbol{\hat{X}}(t) = x, Y(t) = y \Big) d\tau.
		\end{multline*}
		Referring to Lemma \ref{A.1}, $ f(\boldsymbol{\hat{X}}) $ is also uniform bounded. Then we have
		\begin{align} \label{4.3}
			\bigg| \mathbb{E}\Big( \widehat{U}\big(t_n, \boldsymbol{\hat{X}}(t_n), Y(t_n)\big) \Big| \boldsymbol{\hat{X}}(t) = \boldsymbol{x}, Y(t) = y \Big) - \widehat{U}(t,\boldsymbol{x},y) \bigg|  \leq c_1 (T-t)^2 f(\boldsymbol{x}),
		\end{align}
		with some constant $ c_1>0 $.
		
		We apply equation \eqref{U_hat} and triangle inequality to $ \widehat{U}\big(t_n, \boldsymbol{\hat{X}}(t_n), Y(t_n)\big) $ and obtain
		\begin{align*}
			\Big| \widehat{U}\big(t_n, \boldsymbol{\hat{X}}(t_n), Y(t_n)\big) \Big|
			& = \Big| U_T\big(\boldsymbol{\hat{X}}(t_n)\big) + [T-t_n]U^{(1)}\big(\boldsymbol{\hat{X}}(t_n),Y(t_n)\big) \Big| \\
			& \leq \Big| U_T\big(\boldsymbol{\hat{X}}(t_n)\big) \Big| + \Big| T U^{(1)}\big(\boldsymbol{\hat{X}}(t_n),Y(t_n)\big) \Big| \\
			& \leq c_2 g\big(\boldsymbol{\hat{X}}(t_n)\big),
		\end{align*}
		with some constant $ c_2 $. 
		Referring to the proof of Theorem \ref{thm} (Appendix \ref{apendixA1}), we have $ U^{(1)} \sim f\big(\boldsymbol{\hat{X}}(t_n)\big) $ under Case 1 of Assumption \ref{Assume_UT}.
		Then $ g\big(\boldsymbol{\hat{X}}(t_n)\big) = U_T\big(\boldsymbol{\hat{X}}(t_n)\big) + f\big(\boldsymbol{\hat{X}}(t_n)\big) = \ln\|\boldsymbol{\hat{X}}(t_n)\| +1 $ under this case.
		Also, we have $ U_T\big(\boldsymbol{\hat{X}}(t_n)\big) \sim f\big(\boldsymbol{\hat{X}}(t_n)\big) $ under Case 2 of Assumption \ref{Assume_UT}. 
		Thus $ g\big(\boldsymbol{\hat{X}}(t_n)\big) = f\big(\boldsymbol{\hat{X}}(t_n)\big) = \|\boldsymbol{\hat{X}}(t_n)\|^{1-\alpha} + \|\boldsymbol{\hat{X}}(t_n)\|^{1-\beta} $ under this case. 
		
		With substituting $ \boldsymbol{\hat{X}}(t_n) $ into $ \boldsymbol{X}(t_n) $, the Lemma \ref{A.1} in the Appendix \ref{apendixA} also proves that $ \big\{ g\big(\boldsymbol{\hat{X}}(t_n)\big) \big\}_{n=1}^\infty $ is uniformly bounded by an integrable random variable.  We already have 
		$$ \Big| \widehat{U}\big(t_n, \boldsymbol{\hat{X}}(t_n), Y(t_n)\big) \Big| \leq c_2 g\big(\boldsymbol{\hat{X}}(t_n)\big),$$ with some constant $ c_2 $. We also observe that
		\begin{align*}
			t_n \rightarrow T\ \Rightarrow\ \widehat{U}\big(t_n, \boldsymbol{\hat{X}}(t_n), Y(t_n)\big) \rightarrow \widehat{U}\big(T, \boldsymbol{\hat{X}}(T), Y(T)\big) = U_T\big(\boldsymbol{\hat{X}}(T)\big).
		\end{align*}
		By the dominated convergence theorem, we have
		\begin{align} \label{lim2}
			\lim_{n \rightarrow \infty} \mathbb{E}\Big( \widehat{U}\big(t_n, \boldsymbol{\hat{X}}(t_n), Y(t_n)\big) \Big| \boldsymbol{\hat{X}}(t) = \boldsymbol{x}, Y(t) = y \Big) = \mathbb{E}\Big( U_T\big(\boldsymbol{\hat{X}}(T)\big) \Big| \boldsymbol{\hat{X}}(t) = \boldsymbol{x}, Y(t) = y \Big).
		\end{align}
		Combing \eqref{4.3} with \eqref{lim2}, we have
		\begin{align} \label{4.5}
			\bigg| \mathbb{E}\Big( U_T\big(\boldsymbol{\hat{X}}(T)\big) \Big| \boldsymbol{\hat{X}}(t) = \boldsymbol{x}, Y(t) = y \Big) - \widehat{U}(t,\boldsymbol{x},y) \bigg|  \leq c_1 (T-t)^2 f(\boldsymbol{x}).
		\end{align}
		By \eqref{ineq}, \eqref{4.5} and triangle inequality, there exists constant $ c > 0 $ and $ 0 < \varepsilon < \min\{1, T\} $ such that
		\begin{align*}
			& \bigg| J(t,\boldsymbol{x},y) - \mathbb{E}\Big( U_T\big(\boldsymbol{\hat{X}}(T)\big) \Big| \boldsymbol{\hat{X}}(t) = \boldsymbol{x}, Y(t) = y \Big)  \bigg| \\
			& \leq \bigg| J(t,\boldsymbol{x},y) - \widehat{U}(t,\boldsymbol{x},y) \bigg| + \bigg| \widehat{U}(t,\boldsymbol{x},y) - \mathbb{E}\Big( U_T\big(\boldsymbol{\hat{X}}(T)\big) \Big| \boldsymbol{\hat{X}}(t) = \boldsymbol{x}, Y(t) = y \Big) \bigg| \\
			& \leq c(T-t)^2 f(\boldsymbol{x}),
		\end{align*}
		for $ (t,\boldsymbol{x},y) \in (T-\varepsilon,T)\times (\mathbb{R}^+)^n \times \mathbb{R} $.
	\end{proof}

\subsection{Portfolio optimization on a finite time horizon}
\label{sec5}

In this subsection, we approximate the value function for all times $ t \in [0,T] $. Using this approximation with optimal portfolio \eqref{pi_hat},  we generate a close-to-optimal portfolio on $ [0, T] $. To start, we partition the interval $ [0, T] $ into $ m $ subintervals: $ \{ 0 = t_0 < t_1 < \cdots < t_{m-1} < t_m = T \} $. For $ t_k \leq t \leq t_{k+1} $, $ k = 0, \cdots, m-1 $, the approximation scheme is given by 
	\begin{align} \label{approx}
		\widehat{U}(t, \boldsymbol{x}, y) := \widehat{U}(t_{k+1}, \boldsymbol{x}, y) + (t_{k+1} - t)\mathcal{H}\big(\widehat{U}(t_{k+1}, \boldsymbol{x}, y)\big),
	\end{align}
	where
	\begin{align*}
		\mathcal{H}\big(\widehat{U}(t_{k+1}, \boldsymbol{x}, y)\big) = \
		& \sum_{i=1}^{n}  \lambda_i(y) \sigma_i(y) \widehat{\pi}_i(t_{k+1}, \boldsymbol{x}, y) \widehat{U}_{x_i}(t_{k+1}, \boldsymbol{x}, y) 
		+ b(y)\widehat{U}_y(t_{k+1}, \boldsymbol{x}, y) \\
		& + \frac{1}{2} \sum_{i,j=1}^{n} \rho_{ij} \sigma_i(y) \sigma_j(y) \widehat{\pi}_i(t_{k+1}, \boldsymbol{x}, y)  \widehat{\pi}_j(t_{k+1}, \boldsymbol{x}, y) \widehat{U}_{x_ix_j}(t_{k+1}, \boldsymbol{x}, y) \\
		& + a(y) \sum_{i=1}^{n} \omega_i \sigma_i(y) \widehat{\pi}_i(t_{k+1}, \boldsymbol{x}, y)  \widehat{U}_{x_iy}(t_{k+1}, \boldsymbol{x}, y)
		+ \frac{1}{2}a^2(y) \widehat{U}_{yy}(t_{k+1}, \boldsymbol{x}, y),
	\end{align*}
	with
	\begin{multline*}
		\widehat{\pi}_i (t_{k+1}, \boldsymbol{x}, y)
		= \
		 \frac{1}{2 \sigma_i(y) \widehat{U}_{x_ix_i}(t_{k+1}, \boldsymbol{x}, y)} 
		\Bigg[
		-3\Big[\lambda_i(y) \widehat{U}_{x_i}(t_{k+1}, \boldsymbol{x}, y) + \omega_i a(y) \widehat{U}_{x_iy}(t_{k+1}, \boldsymbol{x}, y)\Big] \\
		+  \sum_{k=1}^n 
		\rho_{ik} \Big[\lambda_k(y) \widehat{U}_{x_k}(t_{k+1}, \boldsymbol{x}, y) + \omega_k a(y) \widehat{U}_{x_ky}(t_{k+1}, \boldsymbol{x}, y)\Big]
		\Bigg].
	\end{multline*}
	By substituting the approximation of value function \eqref{approx} into optimal portfolio \eqref{pi_hat}, the corresponding close-to-optimal portfolio is
	\begin{multline*}
		\widetilde{\pi}_i (t, \boldsymbol{x}, y)
		= \frac{1}{2\sigma_i(y) \widehat{U}_{x_ix_i}(t, \boldsymbol{x}, y)} 
		\Bigg[
		-3\Big[\lambda_i(y) \widehat{U}_{x_i}(t, \boldsymbol{x}, y) + \omega_i a(y) \widehat{U}_{x_i y}(t, \boldsymbol{x}, y)\Big] \\
		+  \sum_{k=1}^n 
		\rho_{ik} \Big[\lambda_k(y) \widehat{U}_{x_k}(t, \boldsymbol{x}, y) + \omega_k a(y) \widehat{U}_{x_k y}(t, \boldsymbol{x}, y)\Big]
		\Bigg].
	\end{multline*}

\section{Numerical examples for approximating Merton-like value function}

In this section, we numerically analyze the accuracy of our approximate solution \eqref{U_hat} of the HJB equation \eqref{HJB},  and compare with respect to a Merton-like value function. Merton approximation to the optimal portfolio is used as a benchmark for its simplicity and wide acceptance in financial literature. However, here we consider a multi-dimensional case. We set $ n=2 $ for convenience. Referring to Section 6.2 in \cite{Kumar}, we choose a naive Merton-like value function by substituting $ \|\boldsymbol{x}\| $ for the $ x $ as the benchmark:
\begin{align*}
	U^{Mer}(\boldsymbol{x}) = -e^{-0.0001569674298} \frac{1}{2 \|\boldsymbol{x}\|^2} = - \frac{0.499922}{\|\boldsymbol{x}\|^2}.
\end{align*}
Here we set $ \lambda_1^2(y) =  \lambda_2^2(y) = 0.0002354511446 $ at $ y = 27.9345 $ (same as in \cite{Kumar,LinSG}).
For comparison, we choose the approximation of value function given by the equation (5.2) in \cite{Kumar} with substituting $ \|\boldsymbol{x}\| $ for the $ x $, and denoting it as $ U^C(t, \boldsymbol{x}, y) $.

Based on the above discussion, we set $ U_T(\boldsymbol{x}) = - \frac{1}{2} \|\boldsymbol{x}\|^{-2} $, $ \rho_{12} = \rho_{21} = 0.5241 $, and $ \lambda_1^2(y) = \lambda_2^2(y) = 0.0002354511446 $ at $ y = 27.9345 $ for all $ t $.
We then calculate our approximations of value function at times $ t = 1.5 $ and $ t = 1.9 $, which are close to terminal time $ T=2 $.
The results with respective error are summarized in Table 1, 2, and 3, and are graphed in Fugures 1, 2, 3, and 4 respectively.

\begin{table}[H]
	\caption{$t = 1.5$ and $ 1.9 $, $T = 2$. The comparison value function $ U^C(t, \boldsymbol{x}, y) $, our approximation value function $  \widehat{U}(t, \boldsymbol{x}, y) $, and respective error functions comparing with Merton-like value function $ U^{Mer}(\boldsymbol{x}) = - 0.499922\|\boldsymbol{x}\|^{-2} $.}
	\begin{center}
		\begin{tabular}{c c c c c c c c c c c}
			\hline
			$t$ & $T$ & $ U^C(t, \boldsymbol{x}, y) $ & $ \widehat{U}(t, \boldsymbol{x}, y)  $ & $ \big|U^{Mer} - U^C\big|   $ & $ \big|U^{Mer} - \widehat{U}\big| $   \\
			\hline \\
			1.5 & 2 & $ \approx -\frac{0.484689}{\|\boldsymbol{x}\|^2} $ & $ \approx -\frac{0.499975}{\|\boldsymbol{x}\|^2} $ & $ \approx \frac{0.015233}{\|\boldsymbol{x}\|^2} $ & $ \approx \frac{0.000053}{\|\boldsymbol{x}\|^2} $  \\
			\\
			1.9 & 2 & $ \approx -\frac{0.496938}{\|\boldsymbol{x}\|^2} $ & $ \approx -\frac{0.499995}{\|\boldsymbol{x}\|^2} $ & $ \approx \frac{0.002984}{\|\boldsymbol{x}\|^2} $ & $ \approx \frac{0.000073}{\|\boldsymbol{x}\|^2} $  \\
			\\
			\hline 
		\end{tabular}
	\end{center}
\end{table}

The approximation errors are of the expected order. In fact, with respect to the Merton-like value function (benchmark), the improvement for the proposed technique is significant compared to \cite{Kumar}. This is illustrated in the last two columns of Table 1.  From  Tables 2 and 3, it is clear that our approximating utility values have much smaller percentage error values compared to the utility values in \cite{Kumar}.  Consequently, it is difficult to distinguish between our approximating value function and the Merton-like value function (benchmark), see Figures 1 and 2. We can also observe those when time interval $ (T-t) $ is shortened from $ 0.5 $ to $ 0.1 $, in Figures 3 and 4.

\begin{table}[H]
	\caption{$t = 1.5$, $T = 2$. With some total wealth ($ \|\boldsymbol{x}\| $), the Merton-like utility values ($ U^{Mer} $) are calculated against the comparison utility values ($ U^C $) and our approximation utility values ($ \widehat{U} $). The corresponding percentage error values are also summarized.}
	\begin{center}
		\begin{tabular}{c c c c c c}
			\hline 
			$ x $ & $ U^{Mer} $ & $ U^C $ & $ \widehat{U} $ & $ \big|\frac{U^{Mer} - U^C}{U^{Mer}}\big| \times 100\% $ & $ \big|\frac{U^{Mer} - \widehat{U}}{U^{Mer}}\big| \times 100\% $ \\
			\hline \\
			$ 0.4 $ & $ -3.124509 $ & $ -3.029306 $ & $ -3.124842 $ & $ 3.0470\% $ & $ 0.0107\% $ \\
			\\
			$ 0.5 $ & $ -1.999686 $ & $ -1.938756 $ & $ -1.999899 $ & $ 3.0470\% $ & $ 0.0107\% $ \\
			\\
			$ 0.6 $ & $ -1.388671 $ & $ -1.346358 $ & $ -1.388818 $ & $ 3.0470\% $ & $ 0.0106\% $ \\
			\\
			$ 0.7 $ & $ -1.020248 $ & $ -0.989161 $ & $ -1.020356 $ & $ 3.0470\% $ & $ 0.0106\% $ \\
			\\
			$ 0.8 $ & $ -0.781127 $ & $ -0.757327 $ & $ -0.781210 $ & $ 3.0469\% $ & $ 0.0106\% $ \\
			\\
			$ 0.9 $ & $ -0.617187 $ & $ -0.598381 $ & $ -0.617253 $ & $ 3.0471\% $ & $ 0.0107\% $ \\
			\\
			$ 1.0 $ & $ -0.499922 $ & $ -0.484689 $ & $ -0.499975 $ & $ 3.0471\% $ & $ 0.0106\% $ \\
			\\
			$ 1.1 $ & $ -0.413158 $ & $ -0.400569 $ & $ -0.413202 $ & $ 3.0470\% $ & $ 0.0106\% $ \\
			\\
			$ 1.2 $ & $ -0.347168 $ & $ -0.336590 $ & $ -0.347205 $ & $ 3.0469\% $ & $ 0.0107\% $ \\
			\\
			$ 1.3 $ & $ -0.295812 $ & $ -0.286798 $ & $ -0.295843 $ & $ 3.0472\% $ & $ 0.0105\% $ \\
			\\
			$ 1.4 $ & $ -0.255062 $ & $ -0.247290 $ & $ -0.255089 $ & $ 3.0471\% $ & $ 0.0106\% $ \\
			\\
			$ 1.5 $ & $ -0.222187 $ & $ -0.215417 $ & $ -0.222211 $ & $ 3.0470\% $ & $ 0.0108\% $ \\
			\\
			$ 1.6 $ & $ -0.195282 $ & $ -0.189332 $ & $ -0.195303 $ & $ 3.0469\% $ & $ 0.0108\% $ \\
			\\
			\hline 
		\end{tabular}
	\end{center}
\end{table}

\begin{table}[H]
	\caption{$t = 1.9$, $T = 2$. With some total wealth ($ \|\boldsymbol{x}\| $), the Merton-like utility values ($ U^{Mer} $) are calculated against the comparison utility values ($ U^C $) and our approximation utility values ($ \widehat{U} $). The corresponding percentage error values are also summarized.}
	\begin{center}
		\begin{tabular}{c c c c c c}
			\hline 
			$ x $ & $ U^{Mer} $ & $ U^C $ & $ \widehat{U} $ & $ \big|\frac{U^{Mer} - U^C}{U^{Mer}}\big| \times 100\% $ & $ \big|\frac{U^{Mer} - \widehat{U}}{U^{Mer}}\big| \times 100\% $ \\
			\hline \\
			$ 0.4 $ & $ -3.124509 $ & $ -3.029306  $ & $ -3.124968 $ & $ 3.0470\% $ & $ 0.0147\% $ \\
			\\
			$ 0.5 $ & $ -1.999686 $ & $ -1.938756 $ & $ -1.999980 $ & $ 3.0470\% $ & $ 0.0147\% $ \\
			\\
			$ 0.6 $ & $ -1.388671 $ & $ -1.346358 $ & $ -1.388875 $ & $ 3.0470\% $ & $ 0.0147\% $ \\
			\\
			$ 0.7 $ & $ -1.020248 $ & $ -0.989161 $ & $ -1.020398 $ & $ 3.0470\% $ & $ 0.0147\% $ \\
			\\
			$ 0.8 $ & $ -0.781127 $ & $ -0.757327 $ & $ -0.781242 $ & $ 3.0469\% $ & $ 0.0147\% $ \\
			\\
			$ 0.9 $ & $ -0.617187 $ & $ -0.598381 $ & $ -0.617278 $ & $ 3.0471\% $ & $ 0.0147\% $ \\
			\\
			$ 1.0 $ & $ -0.499922 $ & $ -0.484689 $ & $ -0.499995 $ & $ 3.0471\% $ & $ 0.0146\% $ \\
			\\
			$ 1.1 $ & $ -0.413158 $ & $ -0.400569 $ & $ -0.413219 $ & $ 3.0470\% $ & $ 0.0148\% $ \\
			\\
			$ 1.2 $ & $ -0.347168 $ & $ -0.336590 $ & $ -0.347219 $ & $ 3.0469\% $ & $ 0.0147\% $ \\
			\\
			$ 1.3 $ & $ -0.295812 $ & $ -0.286798 $ & $ -0.295855 $ & $ 3.0472\% $ & $ 0.0145\% $ \\
			\\
			$ 1.4 $ & $ -0.255062 $ & $ -0.247290 $ & $ -0.255099 $ & $ 3.0471\% $ & $ 0.0145\% $ \\
			\\
			$ 1.5 $ & $ -0.222187 $ & $ -0.215417 $ & $ -0.222220 $ & $ 3.0470\% $ & $ 0.0149\% $ \\
			\\
			$ 1.6 $ & $ -0.195282 $ & $ -0.189332 $ & $ -0.195311 $ & $ 3.0469\% $ & $ 0.0149\% $ \\
			\\
			\hline 
		\end{tabular}
	\end{center}
\end{table}

\begin{figure}[H]
	\centering
	\begin{minipage}{\textwidth}
		\centering
		\includegraphics[width=0.6\textwidth]{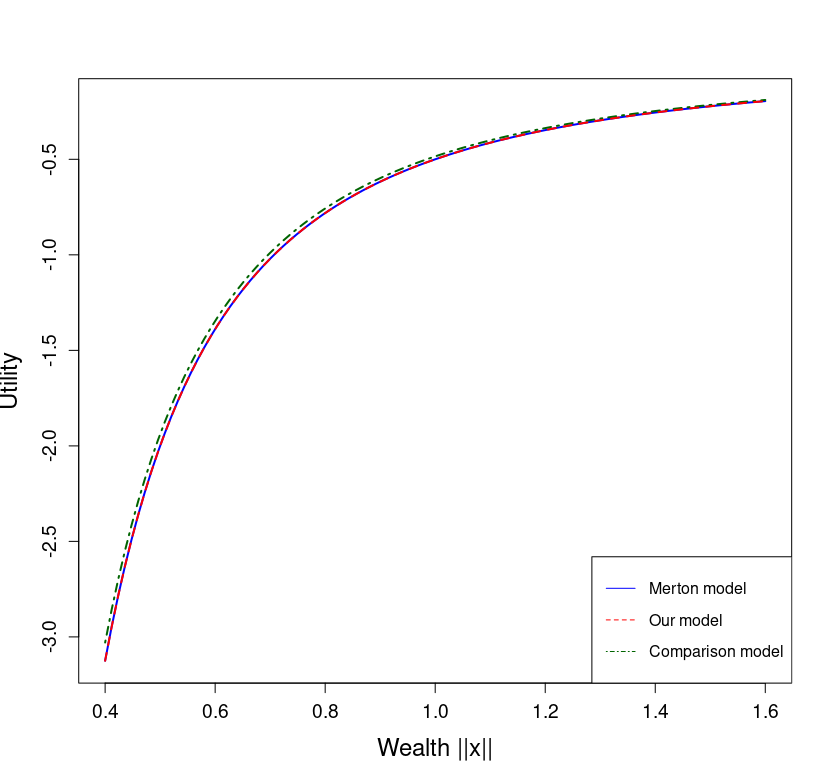}
		\caption{($t = 1.5$, $T = 2$) The Merton-like value function $ U^{Mer} $ is plotted against our approximation value function $  \widehat{U} $ and the comparison value function $ U^C $. It is difficult to distinguish between the Merton-like value function (benchmark) and our approximation value function.}
	\end{minipage}
	\begin{minipage}{\textwidth}
		\centering
		\includegraphics[width=0.6\textwidth]{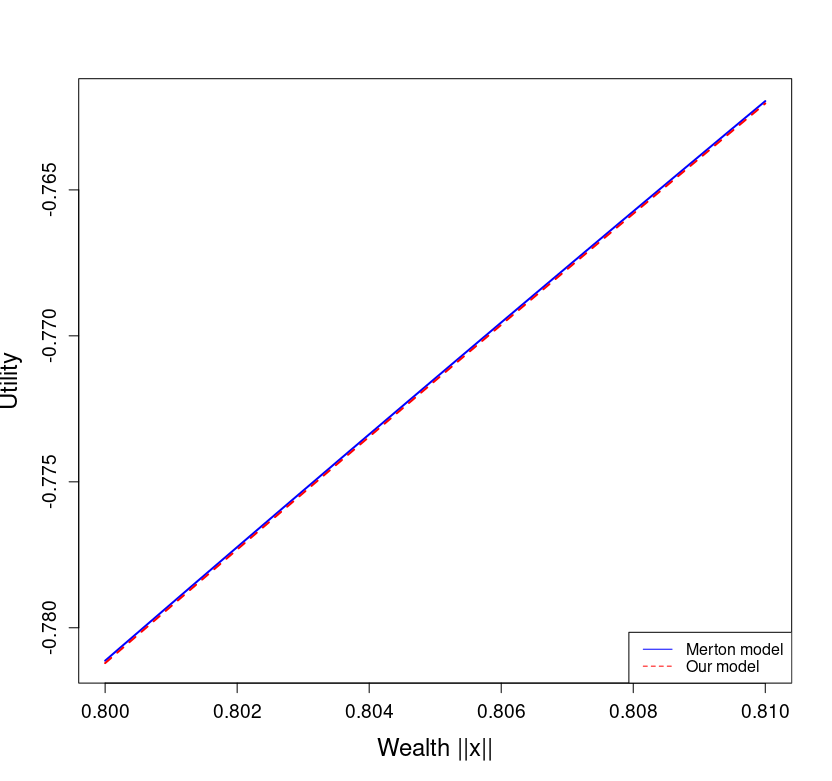}
		\caption{($t = 1.5$, $T = 2$) When Figure 1 is
			zoomed in over a shorter wealth interval, difference between our approximation value function and the Merton-like value function (benchmark) is apparent.}
	\end{minipage}
\end{figure}

\begin{figure}[H]
	\centering
	\begin{minipage}{\textwidth}
		\centering
		\includegraphics[width=0.6\textwidth]{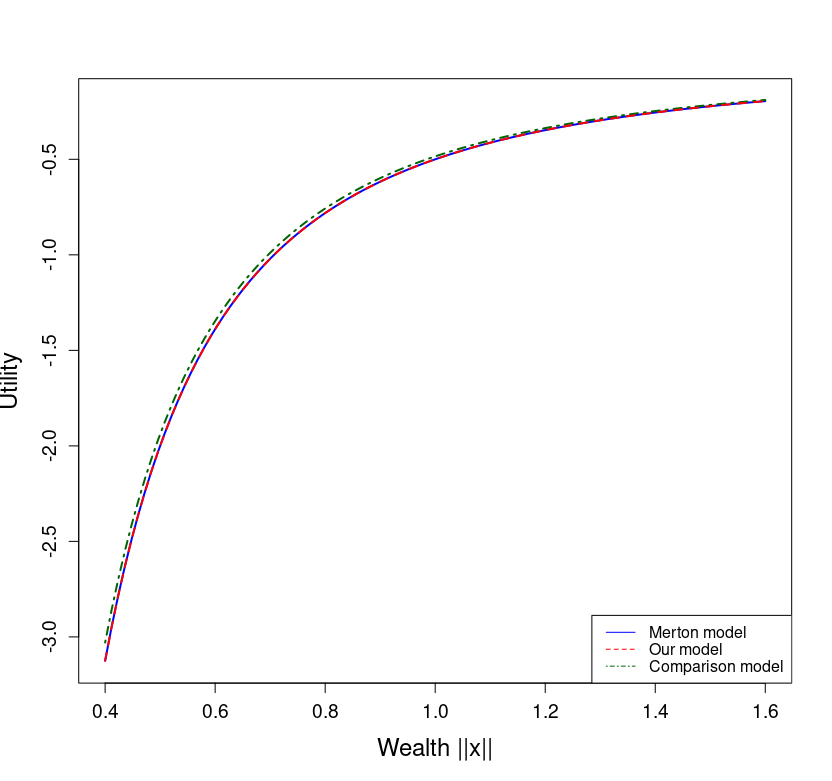}
		\caption{($t = 1.9$, $T = 2$) The Merton-like value function $ U^{Mer} $ is plotted against our approximation value function $  \widehat{U} $ and the comparison value function $ U^C $. It is difficult to distinguish between the Merton-like value function (benchmark) and our approximation value function.}
	\end{minipage}
	\begin{minipage}{\textwidth}
		\centering
		\includegraphics[width=0.6\textwidth]{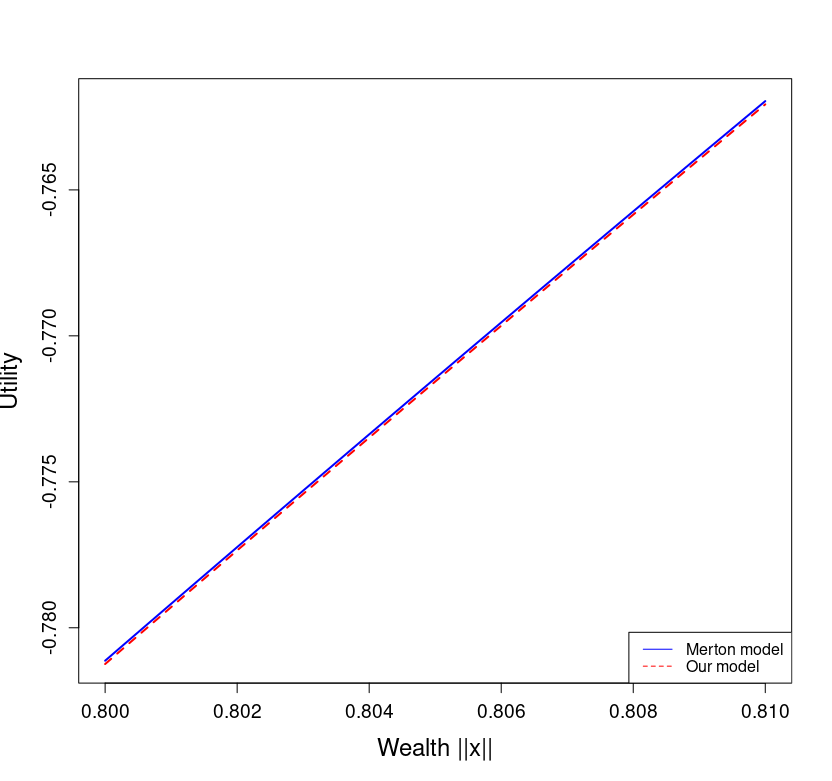}
		\caption{($t = 1.9$, $T = 2$) When Figure 3 is
			zoomed in over a shorter wealth interval, difference between our approximation value function and the Merton-like value function (benchmark) is apparent.}
	\end{minipage}
\end{figure}

\section{Conclusion}
\label{sec6}
In this paper, we consider the finite horizon portfolio optimization in a stochastic volatility market model driven by $ n $-dimensional Brownian motions. 
The value function is approximated using the polynomial expansion method with respect to time to the horizon $ (T - t) $. 
We obtain an approximate solution for the value function and optimal investment strategy. 
It is shown that our $ n $-dimensional model includes all of the advantages of its corresponding $ 1 $-dimensional model in \cite{Kumar}.	
For the $ 2 $-dimensional case, as an example, it is shown that our first-order approximations of the value function and optimal investment strategy perform better than the existing models such as \cite{Kumar}. 

Based on our approximation, we also generate a close-to-optimal portfolio near the time to horizon $ (T - t) $. 
We provide an approximation scheme to the value function for all times $ t \in [0, T ] $ and generate the close-to-optimal portfolio on $ [0, T] $. The accuracy of such approximation will be accomplished by a similar procedure used in Section \ref{sec3} and will be rigorously proved in a sequel of this work.

\appendix

\section{Appendix: Proofs of some technical lemmas}
\label{apendixA}

	\begin{lemma} \label{A.0}
		Suppose that $ U^{(2)}(\boldsymbol{x},y) $ is given by \eqref{U^2}. If all the terms of $ U^{(2)}(\boldsymbol{x},y) $ are enumerated as $ u_1^{(2)}(\boldsymbol{x},y) , \cdots , u_m^{(2)}(\boldsymbol{x},y) $, i.e.\ $ U^{(2)}(\boldsymbol{x},y) = \sum_{i=1}^m u_i^{(2)}(\boldsymbol{x},y) $, then
		\begin{align*}
			u_i^{(2)}(\boldsymbol{x},y) \sim f(\boldsymbol{x})
		\end{align*}
		for $ 1\leq i \leq m $, where $ f(\boldsymbol{x}) = 1 $ under Case 1 of Assumption \ref{Assume_UT} and $ f(\boldsymbol{x}) = \|\boldsymbol{x}\|^{1-\alpha} + \|\boldsymbol{x}\|^{1-\beta} $ under Case 2 of Assumption \ref{Assume_UT}.
	\end{lemma}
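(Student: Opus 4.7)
The plan is to exploit the radial structure of $U_T$: since $U_T(\boldsymbol{x}) = u_T(\|\boldsymbol{x}\|)$ with $(\|\boldsymbol{x}\|)_{x_i} = 1$ for every $i$, every partial derivative $(U_T)_{x_{i_1}\cdots x_{i_k}}$ collapses to the single-variable derivative $u_T^{(k)}(s)$ at $s = \|\boldsymbol{x}\|$, independent of the multi-index. This reduces every ratio and product of $U_T$-derivatives appearing in $U^{(2)}$ to an expression in $s$ alone, on which I can do a direct power count. Before that, I would read off the explicit form of $U^{(2)}$ by substituting $\widehat{U} = U_T + (T-t)U^{(1)} + \tfrac{1}{2}(T-t)^2 U^{(2)} + O((T-t)^3)$ into the HJB equation \eqref{HJB} and matching the $(T-t)^1$ coefficient, producing a finite sum of terms, each a product of a bounded $y$-coefficient (built from $\lambda_i, \rho_{ij}, a, b, \omega_i, \sigma_i$, all bounded by Assumption \ref{aiai}) with a rational combination of $U_T$-derivatives and $y$-derivatives of $U^{(1)}$.

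For Case 1, $u_T(s) = \ln s$ gives $u_T^{(k)}(s) = (-1)^{k-1}(k-1)!\, s^{-k}$, so any ratio $\prod u_T^{(p_i)}/\prod u_T^{(q_j)}$ with $\sum p_i = \sum q_j$ is a pure constant in $s$. In particular both ratios in \eqref{U^1} equal $-1$, so $U^{(1)}$ is independent of $\boldsymbol{x}$; hence its $x_i$-derivatives vanish and only its $y$-derivatives (again $\boldsymbol{x}$-independent) survive in $U^{(2)}$. A direct check that every remaining term has balanced numerator/denominator degree in $s$ then yields $u_i^{(2)} \sim 1 = f(\boldsymbol{x})$.

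For Case 2, I would write $u_T^{(k)}(s) = A_k s^{-\alpha-k+1} + B_k s^{-\beta-k+1}$. In the monomial subcases ($c_1=0$ or $c_2=0$) each ratio is homogeneous in $s$ and an index count shows its degree is exactly $1-\alpha$ or $1-\beta$, matching $f$. In the mixed subcase, I factor $u_T^{(k)}(s) = A_k s^{-\alpha-k+1}\bigl(1 + (B_k/A_k)s^{\alpha-\beta}\bigr)$ so that each ratio appearing in $U^{(2)}$ becomes a leading $s^{1-\alpha}$ times a rational function of $s^{\alpha-\beta}$ that is bounded and strictly positive on compact subsets of $(0,\infty)$; a symmetric rewrite pulls out $s^{1-\beta}$ instead as $s\to\infty$. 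Either way the term is dominated by $\|\boldsymbol{x}\|^{1-\alpha} + \|\boldsymbol{x}\|^{1-\beta} = f(\boldsymbol{x})$ uniformly in both tails.

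The main obstacle is the bookkeeping around the $y$-derivatives of $U^{(1)}$ inside $U^{(2)}$: I have to verify that differentiating $U^{(1)}$ in $y$ touches only the $y$-coefficients (which stay bounded by Assumption \ref{aiai}) and leaves the $U_T$-ratios intact, so that every surviving term still factors as \textit{bounded function of $y$} $\times$ \textit{$U_T$-ratio already analyzed}. Once that separation is made explicit, the power-counting of the previous two paragraphs applies uniformly across the finitely many terms enumerated as $u_1^{(2)},\dots,u_m^{(2)}$, establishing $u_i^{(2)} \sim f(\boldsymbol{x})$ for each $i$.
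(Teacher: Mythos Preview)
Your proposal is correct and follows essentially the same route as the paper: both reduce every partial derivative of $U_T$ to a single-variable derivative $u_T^{(k)}(\|\boldsymbol{x}\|)$ via the radial structure, then check that each resulting ratio of such derivatives is asymptotic to $f(\boldsymbol{x})$. The paper carries this out by explicitly listing the four prototype ratios
\[
\frac{[U^{(0)}_{x_i}]^2}{U^{(0)}_{x_ix_i}},\quad
\frac{[U^{(0)}_{x_i}]^3 U^{(0)}_{x_ix_ix_i}}{[U^{(0)}_{x_ix_i}]^3},\quad
\frac{[U^{(0)}_{x_i}]^4 U^{(0)}_{x_ix_ix_ix_i}}{[U^{(0)}_{x_ix_i}]^4},\quad
\frac{[U^{(0)}_{x_i}]^4 [U^{(0)}_{x_ix_ix_i}]^2}{[U^{(0)}_{x_ix_i}]^5}
\]
and computing each limit against $f$ directly, whereas you replace those computations with a degree-balancing/homogeneity argument; your treatment of the $y$-derivatives of $U^{(1)}$ (that they touch only the bounded $y$-coefficients and leave the $U_T$-ratios intact) makes explicit a step the paper takes for granted.
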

	\begin{proof}
		We denote $ U^{(0)} = U_T(\boldsymbol{x}) $.
		From \eqref{U^2}, we observe that $ U^{(2)} $ is a linear combination of the following terms:
		\begin{align*}
			\frac{\big[U^{(0)}_{x_i}\big]^2}{U^{(0)}_{x_ix_i}}, \
			\frac{\big[U^{(0)}_{x_i}\big]^3U^{(0)}_{x_ix_ix_i}}{\big[U^{(0)}_{x_ix_i}\big]^3}, \
			\frac{\big[U^{(0)}_{x_i}\big]^4U^{(0)}_{x_ix_ix_ix_i}}{\big[U^{(0)}_{x_ix_i}\big]^4}, \text{ and }
			\frac{\big[U^{(0)}_{x_i}\big]^4 \big[U^{(0)}_{x_ix_ix_i}\big]^2}{\big[U^{(0)}_{x_ix_i}\big]^5},\ i = 1, \dots, n.
		\end{align*}
		We let $ f(\boldsymbol{x}) = 1 $ under Case 1 of Assumption \ref{Assume_UT}. Then, for $ i = 1, \dots, n $,
		\begin{align*}
			& \lim_{\|\boldsymbol{x}\| \rightarrow \infty} \frac{\big[U^{(0)}_{x_i}\big]^2}{U^{(0)}_{x_ix_i}} \Big/ f(\boldsymbol{x})
			= -1,
			\qquad \qquad \ \ 
			\lim_{\|\boldsymbol{x}\| \rightarrow \infty} \frac{\big[U^{(0)}_{x_i}\big]^3U^{(0)}_{x_ix_ix_i}}{\big[U^{(0)}_{x_ix_i}\big]^3} \Big/ f(\boldsymbol{x}) = -2, \\
			& \lim_{\|\boldsymbol{x}\| \rightarrow \infty} \frac{\big[U^{(0)}_{x_i}\big]^4U^{(0)}_{x_ix_ix_ix_i}}{\big[U^{(0)}_{x_ix_i}\big]^4} \Big/ f(\boldsymbol{x}) = -6,
			\quad
			\lim_{\|\boldsymbol{x}\| \rightarrow \infty} \frac{\big[U^{(0)}_{x_i}\big]^4 \big[U^{(0)}_{x_ix_ix_i}\big]^2}{\big[U^{(0)}_{x_ix_i}\big]^5} \Big/ f(\boldsymbol{x}) = -4.
		\end{align*}
		We let $ f(\boldsymbol{x}) = \|\boldsymbol{x}\|^{1-\alpha} + \|\boldsymbol{x}\|^{1-\beta} $ under Case 2 of Assumption \ref{Assume_UT}. Then, for $ i = 1, \dots, n $,
		\begin{align*}
			\lim_{\|\boldsymbol{x}\| \rightarrow \infty} \frac{\big[U^{(0)}_{x_i}\big]^2}{U^{(0)}_{x_ix_i}} \Big/ f(\boldsymbol{x})
			& = \lim_{\|\boldsymbol{x}\| \rightarrow \infty}
			\frac{\big[c_1\|\boldsymbol{x}\|^{-\alpha}+c_2\|\boldsymbol{x}\|^{-\beta}\big]^2}
			{\big[-c_1\alpha\|\boldsymbol{x}\|^{-\alpha}-c_2\beta\|\boldsymbol{x}\|^{-\beta}\big] \big[\|\boldsymbol{x}\|^{-\alpha} + \|\boldsymbol{x}\|^{-\beta}\big]} \\
			& =
			\left\{
			\begin{array}{ll}
				-c_1/\alpha, \qquad\qquad\ \text{if } \alpha < \beta \\
				-(c_1+c_2)/(2\alpha), \text{ if } \alpha = \beta \\
				-c_2/\beta, \qquad\qquad\ \text{if } \alpha > \beta
			\end{array}
			\right.
			.
		\end{align*}
		Similarly, 
		\begin{align*}
			& \lim_{\|\boldsymbol{x}\| \rightarrow \infty} \frac{\big[U^{(0)}_{x_i}\big]^3 U^{(0)}_{x_ix_ix_i}}{\big[U^{(0)}_{x_ix_i}\big]^3} \Big/ f(\boldsymbol{x})
			=
			\left\{
			\begin{array}{ll}
				-c_1(\alpha+1)/\alpha^2, \qquad\qquad\ \text{if } \alpha < \beta \\
				-(c_1+c_2)(\alpha+1)/(2\alpha^2), \text{ if } \alpha = \beta \\
				-c_2(\beta+1)/\beta^2, \qquad\qquad\ \text{if } \alpha > \beta
			\end{array}
			\right.			
			, \\
			& \lim_{\|\boldsymbol{x}\| \rightarrow \infty} \frac{\big[U^{(0)}_{x_i}\big]^4U^{(0)}_{x_ix_ix_ix_i}}{\big[U^{(0)}_{x_ix_i}\big]^4} \Big/ f(\boldsymbol{x})
			=
			\left\{
			\begin{array}{ll}
				- c_1(\alpha+1)(\alpha+2)/\alpha^3, \qquad\qquad\ \text{if } \alpha < \beta \\
				- (c_1+c_2)(\alpha+1)(\alpha+2)/(2\alpha^3), \text{ if } \alpha = \beta \\
				- c_2(\beta+1)(\beta+2)/\beta^3, \qquad\qquad\ \text{if } \alpha > \beta
			\end{array}
			\right.			
			, \\
			& \lim_{\|\boldsymbol{x}\| \rightarrow \infty} \frac{\big[U^{(0)}_{x_i}\big]^4 \big[U^{(0)}_{x_ix_ix_i}\big]^2}{\big[U^{(0)}_{x_ix_i}\big]^5} \Big/ f(\boldsymbol{x})
			=
			\left\{
			\begin{array}{ll}
				- c_1(\alpha+1)^2/\alpha^3, \qquad\qquad\ \text{if } \alpha < \beta \\
				- (c_1+c_2)(\alpha+1)^2/(2\alpha^3), \text{ if } \alpha = \beta \\
				- c_2(\beta+1)^2/\beta^3, \qquad\qquad\ \text{if } \alpha > \beta
			\end{array}
			\right.			
			.
		\end{align*}
		By definition, for $ i = 1, \dots, n $, we have
		\begin{align*}
			\frac{\big[U^{(0)}_{x_i}\big]^2}{U^{(0)}_{x_ix_i}} \sim f(\boldsymbol{x}), 
			\frac{\big[U^{(0)}_{x_i}\big]^3U^{(0)}_{x_ix_ix_i}}{\big[U^{(0)}_{x_ix_i}\big]^3} \sim f(\boldsymbol{x}), 
			\frac{\big[U^{(0)}_{x_i}\big]^4U^{(0)}_{x_ix_ix_ix_i}}{\big[U^{(0)}_{x_ix_i}\big]^4} \sim f(\boldsymbol{x}), 
			\text{and}
			\frac{\big[U^{(0)}_{x_i}\big]^4 \big[U^{(0)}_{x_ix_ix_i}\big]^2}{\big[U^{(0)}_{x_ix_i}\big]^5} \sim f(\boldsymbol{x}).
		\end{align*}
		Finally, by the sum law and constant multiple law of limit, we have $ u_i^{(2)}(\boldsymbol{x},y) \sim f(\boldsymbol{x}) $ for $ 1\leq i \leq m $.
	\end{proof}
	\begin{lemma} \label{A.1}
		Let $ x_i = X_i(t) $ be the $ i $-th wealth process defined by \eqref{Xt}, $ n = 1, \cdots, n $, under any admissible portfolio. Denote $ \boldsymbol{x} = (x_1,\dots,x_n) $ and $ \boldsymbol{X}(\cdot) = \big(X_1(\cdot),\dots,X_n(\cdot)\big) $. Let $ g(\boldsymbol{x}) = \ln \|\boldsymbol{x}\| +1 $ under Case 1 of Assumption \ref{Assume_UT}, and $ g(\boldsymbol{x}) = \|\boldsymbol{x}\|^{1-\alpha} + \|\boldsymbol{x}\|^{1-\beta} $, $ \alpha, \beta > 0 $ and $ \alpha, \beta \neq 1 $, under Case 2 of Assumption \ref{Assume_UT}. Then $ \big\{ g\big(\boldsymbol{X}(t_n)\big) \big\}_{n=1}^\infty $ is uniformly bounded by an integrable random variable, where $ \{t_n \}_{n=1}^\infty \subset [t, T] $ is a sequence of stopping times s.t. $ t_n \leq t_{n+1} $ and $ t_n \rightarrow T $.
	\end{lemma}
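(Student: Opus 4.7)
The plan is to exhibit a single integrable random variable $M$, independent of $n$, that dominates $g(\boldsymbol{X}(t_n))$ for every $n$. Since each $t_n \in [t,T]$, the natural candidate is $M := \sup_{t \leq \tau \leq T} |g(\boldsymbol{X}(\tau))|$, so the proof reduces to establishing $\mathbb{E}[M] < \infty$.

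First I would introduce $Z(\tau) := \|\boldsymbol{X}(\tau)\| = \sum_{i=1}^n X_i(\tau)$, which is strictly positive by item 2 of Definition \ref{def_1}. Summing the components of \eqref{Xt} gives the Ito process
\begin{align*}
dZ(\tau) = \sum_{i=1}^n \sigma_i(Y(\tau))\pi_i \lambda_i(Y(\tau))\, d\tau + \sum_{i=1}^n \sigma_i(Y(\tau))\pi_i\, dW_i(\tau).
\end{align*}
By Assumption \ref{aiai} the Sharpe ratios $\lambda_i$ are bounded, and by item 3 of Definition \ref{def_1} the diffusion coefficients satisfy $\mathbb{E}[\int_0^T \sigma_i^2 \pi_i^2 X_i^{-2r}\, d\tau] < \infty$ for every $r > 0$. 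Since $X_i(\tau) \leq Z(\tau)$, and by admissibility each $X_i(\tau) > 0$, these estimates provide strong $L^p$ control of both the drift and the quadratic variation of the stochastic integral after weighting by suitable negative powers of $Z$.

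Next I treat the two cases separately by applying Ito's formula. For Case 1, apply Ito to $\ln Z(\tau)$, obtaining a drift involving $Z^{-1}$ and $Z^{-2}$ and a martingale $\int_t^\tau Z^{-1}\sum_i \sigma_i \pi_i\, dW_i$. Applying the Burkholder--Davis--Gundy inequality to the martingale part and the Cauchy--Schwarz inequality to the drift, then bounding $Z^{-1} \leq X_i^{-1}$ (using positivity of the $X_j$'s) and invoking the admissibility condition with $r=1$, yields $\mathbb{E}[\sup_{t \leq \tau \leq T} |\ln Z(\tau)|] < \infty$, hence $\mathbb{E}[\sup_{t \leq \tau \leq T} |\ln Z(\tau) + 1|] < \infty$. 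For Case 2, apply Ito to $Z(\tau)^p$ with $p = 1-\alpha$ and, separately, $p = 1-\beta$; the bounded-variation part contains factors $Z^{p-1}$ and $Z^{p-2}$ while the martingale has quadratic variation $\int_t^\tau Z^{2(p-1)}(\sum_i \sigma_i\pi_i)^2\, ds$. Again BDG plus the admissibility bound, with $r$ chosen depending on $\alpha$ and $\beta$ so that $Z^{2(p-1)} \lesssim X_i^{-2r}$, yields $\mathbb{E}[\sup_{t \leq \tau \leq T} Z(\tau)^{1-\alpha}]$ and $\mathbb{E}[\sup_{t \leq \tau \leq T} Z(\tau)^{1-\beta}]$ are both finite.

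The main obstacle I anticipate is the delicate choice of the exponent $r$ in the admissibility condition, together with care about the sign of $p-1$: when $p-1 < 0$ the factor $Z^{2(p-1)}$ can blow up as $Z \downarrow 0$, but the availability of \emph{any} $r > 0$ in Definition \ref{def_1}, combined with the elementary bound $Z^{2(p-1)} \leq X_i^{2(p-1)}$ on the event $\{p-1 < 0\}$ and $Z^{2(p-1)} \leq n^{2(p-1)} \max_j X_j^{2(p-1)}$ on the event $\{p-1 \geq 0\}$, turns this into a bookkeeping exercise rather than a genuine difficulty. Once these supremum bounds are in hand, the chain $g(\boldsymbol{X}(t_n)) \leq \sup_{t \leq \tau \leq T} |g(\boldsymbol{X}(\tau))| =: M$ with $\mathbb{E}[M] < \infty$ delivers the uniform integrable domination claimed by the lemma.
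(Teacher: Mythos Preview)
Your proposal is correct and follows essentially the same route as the paper: apply It\^o's formula to $\ln\|\boldsymbol{X}\|$ (resp.\ $\|\boldsymbol{X}\|^{1-r}$), control the drift via the admissibility integrability condition of Definition~\ref{def_1}, and control the running supremum of the martingale part via a maximal inequality (you use BDG; the paper uses Doob's $L^2$-maximal inequality together with the It\^o isometry). Your explicit construction of the dominating variable $M=\sup_{t\le\tau\le T}|g(\boldsymbol{X}(\tau))|$ is in fact slightly cleaner than the paper's argument, which only records a uniform bound on $\mathbb{E}\big[g(\boldsymbol{X}(\tau))\big]$ before concluding.
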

	\begin{proof}
		We denote $ \sigma_i(t) = \sigma_i\big(Y(t)\big) $, $ \pi_i(t) = \pi_i\big(t, \boldsymbol{X}(t), Y(t)\big) $, and $ \lambda_i(t) = \lambda_i\big(Y(t)\big) $, and let $ \tau \in [t, T] $.  For $ g(\boldsymbol{x}) = \ln\|\boldsymbol{x}\| +1 $, it is identical to show that $ \big\{ \ln \|\boldsymbol{X}(t_n)\| \big\}_{n=1}^\infty $ is uniformly bounded by an integrable random variable. We apply It\^{o}'s formula to $ \ln\|\boldsymbol{X}(\tau)\| $ and obtain
		\begin{align*}
			\ln\|\boldsymbol{X}(\tau)\| = \
			&  \ln \|\boldsymbol{x}\| 
			+ \sum_{i=1}^n
			\int_t^\tau \bigg[ \frac{\sigma_i(s) \pi_i(s) \lambda_i(s)}{X_i(s)}
				- \frac{\sigma^2_i(s) \pi^2_i(s)}{2X_i^2(s)} \bigg] ds
			+ \sum_{i=1}^n 
			\int_t^\tau  \frac{\sigma_i(s) \pi_i(s)}{X_i(s)} dW_i(s).
		\end{align*}
		Taking expectation, we choose some constant $ c_1 $ such that 
		\begin{align*}
			\mathbb{E} \big( \ln\|\boldsymbol{X}(\tau)\| \big) \leq\
			&  c_1 \Bigg[
			1 +  \mathbb{E} \bigg( \sum_{i=1}^n \int_0^T \frac{\sigma_i^2(s) \pi_i^2(s) }{X_i^2(s)} ds \bigg)
			+ \mathbb{E} \bigg( \sup_\tau \bigg[ \sum_{i=1}^n \int_t^\tau  \frac{\sigma_i(s) \pi_i(s)}{X_i(s)} dW_i(s) \bigg]^2 \bigg)
			\Bigg].
		\end{align*}
		By Doob's martingale maximal inequalities, we have
		\begin{align*}
			\mathbb{E} \big( \ln\|\boldsymbol{X}(\tau)\| \big) \leq \
			& c_1 \Bigg[
			1 +  \mathbb{E} \bigg( \sum_{i=1}^n \int_0^T \frac{\sigma_i^2(s) \pi_i^2(s) }{X_i^2(s)} ds \bigg)
			+ 4 \mathbb{E} \bigg( \bigg[ \sum_{i=1}^n \int_t^\tau  \frac{\sigma_i(s) \pi_i(s)}{X_i(s)} dW_i(s) \bigg]^2 \bigg)
			\Bigg].
		\end{align*}
		Then by It\^{o} isometries, we have
		\begin{align*}
			\mathbb{E} \big( \ln\|\boldsymbol{X}(\tau)\| \big) \leq \
			& c_1 \Bigg[
			1 +  5\mathbb{E} \bigg( \sum_{i=1}^n \int_0^T \frac{\sigma_i^2(s) \pi_i^2(s) }{X_i^2(s)} ds \bigg)
			\Bigg].
		\end{align*}
		Finally, by Definition \ref{def_1}, the right-hand side of above inequality is finite, which serves as the uniform bound for $ \mathbb{E} \big( \ln\|\boldsymbol{X}(\tau)\| \big) $, $ \tau \in [t, T] $. This implies the $ \big\{ \ln\|\boldsymbol{X}(t_n)\| \big\}_{n=1}^\infty $ is uniformly bounded by an integrable random variable.

		For $ g(\boldsymbol{x}) = \|\boldsymbol{x}\|^{1-\alpha} + \|\boldsymbol{x}\|^{1-\beta} $, it is identical to show that $ \big\{ \|\boldsymbol{X}(t_n)\|^{1-r} \big\}_{n=1}^\infty $, $ r > 0 $ and $ r \neq 1, $ is uniformly bounded by an integrable random variable. We apply It\^{o}'s formula to $ \|\boldsymbol{X}(\tau)\|^{1-r} $ and obtain
		\begin{multline*}
			\|\boldsymbol{X}(\tau)\|^{1-r} \\
			= \|\boldsymbol{x}\|^{1-r} + [1-r] \Bigg[ 
			\sum_{i=1}^n
			\int_t^\tau \bigg[ \frac{\sigma_i(s) \pi_i(s) \lambda_i(s)}{X_i^r(s)}
			- \frac{r \sigma^2_i(s) \pi^2_i(s)}{2X_i^{r+1}(s)} \bigg] ds
			+ \sum_{i=1}^n 
			\int_t^\tau  \frac{\sigma_i(s) \pi_i(s)}{X_i^r(s)} dW_i(s)
			\Bigg].
		\end{multline*}
		Taking expectation, we choose some constant $ c_2 $ such that 
		\begin{align*}
			\mathbb{E} \big(\|\boldsymbol{X}(\tau)\|^{1-r}\big)
			\leq c_2 \Bigg[ 1+ \mathbb{E} \bigg(
			\sum_{i=1}^n
			\int_0^T \frac{\sigma^2_i(s) \pi^2_i(s)}{X_i^{2r}(s)} ds \bigg)
			+ \mathbb{E} \bigg(  \sup_\tau \bigg[ \sum_{i=1}^n 
			\int_t^\tau  \frac{\sigma_i(s) \pi_i(s)}{X_i^r(s)} dW_i(s) \bigg]^2 \bigg)
			\Bigg].
		\end{align*}
		By Doob's martingale maximal inequalities, we have
		\begin{align*}
			\mathbb{E} \big(\|\boldsymbol{X}(\tau)\|^{1-r}\big)
			\leq c_2 \Bigg[ 1+ \mathbb{E} \bigg(
			\sum_{i=1}^n
			\int_0^T \frac{\sigma^2_i(s) \pi^2_i(s)}{X_i^{2r}(s)} ds \bigg)
			+ 4 \mathbb{E} \bigg(\bigg[ \sum_{i=1}^n 
			\int_t^\tau  \frac{\sigma_i(s) \pi_i(s)}{X_i^r(s)} dW_i(s) \bigg]^2 \bigg)
			\Bigg].
		\end{align*}
		Then by It\^{o} isometries, we have
		\begin{align*}
			\mathbb{E} \big(\|\boldsymbol{X}(\tau)\|^{1-r}\big)
			\leq c_2 \Bigg[ 1+ 5 \mathbb{E} \bigg(
			\sum_{i=1}^n
			\int_0^T \frac{\sigma^2_i(s) \pi^2_i(s)}{X_i^{2r}(s)} ds \bigg)
			\Bigg].
		\end{align*}
	Finally, by Definition \ref{def_1}, the right-hand side of above inequality is finite, which serves as the uniform bound for $ \mathbb{E} \big(\|\boldsymbol{X}(\tau)\|^{1-r}\big) $, $ \tau \in [t, T] $. This implies the $ \big\{ \|\boldsymbol{X}(t_n)\|^{1-r} \big\}_{n=1}^\infty $, $ r > 0 $, $ r \neq 1, $ is uniformly bounded by an integrable random variable.
	\end{proof}
	
	\begin{lemma} \label{A.2}
		For $ i = 1,\dots,n $, if the wealth $ \widehat{X}_i(t) $ is defined by
		\begin{align} \label{A.2.1}
			d\widehat{X}_i(t) = \sigma_i\big(Y(t)\big) \widetilde{\pi}_i\big(t,\widehat{X}_1(t),\dots,\widehat{X}_n(t),Y(t)\big) \Big[ \lambda_i\big(Y(t)\big)dt + dW_i(t) \Big]
		\end{align}
		where
		\begin{multline} \label{A.2.2}
			\widetilde{\pi}_i (t,\boldsymbol{x},y) =\
			\frac{1}{2\sigma_i(y) \widehat{U}_{x_ix_i}(t,\boldsymbol{x},y)} 
			\Bigg[
			-3\Big[\lambda_i(y)\widehat{U}_{x_i}(t,\boldsymbol{x},y)+a(y) \omega_i \widehat{U}_{x_iy}(t,\boldsymbol{x},y)\Big] \\
			+  \sum_{k=1}^n 
			\rho_{ik} \Big[\lambda_k(y) \widehat{U}_{x_k}(t,\boldsymbol{x},y)+a(y) \omega_k \widehat{U}_{x_ky}(t,\boldsymbol{x},y)\Big]
			\Bigg]
		\end{multline}
		with $ \widehat{U}(t,\boldsymbol{x},y) $ given by \eqref{U_hat}, and $ \big(\widehat{X}_1(t),\dots,\widehat{X}_n(t)\big) = (x_1,\dots,x_n) = \boldsymbol{x}$, where $ x_i > 0 $ (for $i=1,\dots,n$), $ Y(t) = y $; then the optimal portfolio $ \widetilde{\pi}_i(t,\boldsymbol{x},y)  $ is admissible under Assumption \ref{Assume_UT}.
	\end{lemma}
	\begin{proof}
		We denote $ \sigma_i(t) = \sigma_i\big(Y(t)\big) $ and $ \widetilde{\pi}_i(t) = \widetilde{\pi}_i\big(t, \widehat{X}_1(t),\dots,\widehat{X}_n(t), Y(t)\big) $, $ i = 1,\dots,n $.
		Because $ \widetilde{\pi}_i(t) $ is continuous with respect to $ t $, therefore it is progressively measurable. 
		
		We denote $ U^{(0)} = U_T(\boldsymbol{x}) $. Referring to the proof of Lemma \ref{A.0}, from equation \eqref{U_hat}, we observe that $ \widehat{U}(t,\boldsymbol{x},y) $ is a linear combination of the terms that are equivalent to $ U^{(0)} $ and $ [U^{(0)}_{x_i}]^2 \big/ U^{(0)}_{x_ix_i} $, $ i = 1,\dots,n $.
		Then we observe $ \widetilde{\pi}_i (t,\boldsymbol{x},y) $ is a linear combination of the following terms:
		\begin{align*}
			\frac{U^{(0)}_{x_i}}{U^{(0)}_{x_ix_i}}, \
			\frac{\big[U^{(0)}_{x_i}\big]^2U^{(0)}_{x_ix_ix_i}}{\big[U^{(0)}_{x_ix_i}\big]^3}, \
			\frac{U^{(0)}_{x_ix_i}}{U^{(0)}_{x_ix_ix_i}}, \
			\frac{\big[U^{(0)}_{x_ix_i}\big]^3}{U^{(0)}_{x_i}\big[U^{(0)}_{x_ix_ix_i}\big]^2}, \
			\frac{U^{(0)}_{x_ix_ix_i}}{U^{(0)}_{x_ix_ix_ix_i}},
			\text{ and }
			\frac{\big[U^{(0)}_{x_ix_i}\big]^2}{U^{(0)}_{x_i}U^{(0)}_{x_ix_ix_ix_i}},\ i = 1, \dots, n.
		\end{align*}
		
		For $ i = 1,\dots,n $, we note $ \|\boldsymbol{x}\| \big/ x_i \rightarrow 1 $ as $ x_i \rightarrow \infty $. Then we have, under Case 1 of Assumption \ref{Assume_UT}
		\begin{align*}
			&\lim_{x_i \rightarrow \infty} \frac{U^{(0)}_{x_i}}{x_i U^{(0)}_{x_ix_i}} = -1,\
			\qquad \qquad
			\lim_{x_i \rightarrow \infty} \frac{\big[U^{(0)}_{x_i}\big]^2U^{(0)}_{x_ix_ix_i}}{x_i \big[U^{(0)}_{x_ix_i}\big]^3} = -2,\
			\lim_{x_i \rightarrow \infty} \frac{U^{(0)}_{x_ix_i}}{x_i U^{(0)}_{x_ix_ix_i}} = -\frac{1}{2},\\
			&\lim_{x_i \rightarrow \infty} \frac{\big[U^{(0)}_{x_ix_i}\big]^3}{x_i U^{(0)}_{x_i}\big[U^{(0)}_{x_ix_ix_i}\big]^2} = -\frac{1}{4},\
			\lim_{x_i \rightarrow \infty} \frac{U^{(0)}_{x_ix_ix_i}}{x_i U^{(0)}_{x_ix_ix_ix_i}} = -\frac{1}{3},\
			\quad \
			\lim_{x_i \rightarrow \infty} \frac{\big[U^{(0)}_{x_ix_i}\big]^2}{x_i U^{(0)}_{x_i}U^{(0)}_{x_ix_ix_ix_i}} = -\frac{1}{6};
		\end{align*}
		and under Case 2 of Assumption \ref{Assume_UT}
		\begin{align*}
			&\lim_{x_i \rightarrow \infty} \frac{U^{(0)}_{x_i}}{x_i U^{(0)}_{x_ix_i}} = 
			\left\{
			\begin{array}{ll}
				-1/\alpha,\text{ if } \alpha \leq \beta \\
				-1/\beta,\text{ if } \alpha \geq \beta
			\end{array}
			\right.
			, \\
			& \lim_{x_i \rightarrow \infty} \frac{\big[U^{(0)}_{x_i}\big]^2U^{(0)}_{x_ix_ix_i}}{x_i \big[U^{(0)}_{x_ix_i}\big]^3} 
			=
			\left\{
			\begin{array}{ll}
				-(\alpha+1)/\alpha^2, \text{ if } \alpha \leq \beta \\
				-(\beta+1)/\beta^2,  \text{ if } \alpha \geq \beta
			\end{array}
			\right.
			, \\
			& \lim_{x_i \rightarrow \infty} \frac{U^{(0)}_{x_ix_i}}{x_i U^{(0)}_{x_ix_ix_i}} = 
			\left\{
			\begin{array}{ll}
				- 1/(\alpha+1), \text{ if } \alpha \leq \beta \\
				- 1/(\beta+1), \text{ if } \alpha \geq \beta
			\end{array}
			\right.
			, \\
			& \lim_{x_i \rightarrow \infty} \frac{\big[U^{(0)}_{x_ix_i}\big]^3}{x_iU^{(0)}_{x_i}\big[U^{(0)}_{x_ix_ix_i}\big]^2} 
			= 
			\left\{
			\begin{array}{ll}
				-\alpha/(\alpha+1)^2,  \text{ if } \alpha \leq \beta \\
				-\beta/(\beta+1)^2,  \text{ if } \alpha \geq \beta
			\end{array}
			\right.
			, \\
			& \lim_{x_i \rightarrow \infty} \frac{U^{(0)}_{x_ix_ix_i}}{x_i U^{(0)}_{x_ix_ix_ix_i}} 
			= 
			\left\{
			\begin{array}{ll}
				-1/(\alpha+2), \text{ if } \alpha \leq \beta \\
				-1/(\beta+2), \text{ if } \alpha \geq \beta
			\end{array}
			\right.
			, \\
			& \lim_{x_i \rightarrow \infty} \frac{\big[U^{(0)}_{x_ix_i}\big]^2}{x_i U^{(0)}_{x_i}U^{(0)}_{x_ix_ix_ix_i}} 
			= 
			\left\{
			\begin{array}{ll}
				-\alpha/(\alpha+1)(\alpha+2), \text{ if } \alpha \leq \beta \\
				-\beta/(\beta+1)(\beta+2), \text{ if } \alpha \geq \beta
			\end{array}
			\right.
			.
		\end{align*}
		Hence, we have $ \big| \sigma_i(t) \widetilde{\pi}_i(t) / x_i \big| \leq c_i $ with some constant $ c_i $, for $ i = 1,\dots,n $. Finally, a similar argument to \cite{Kumar} (Lemma A.1) gives:
		\begin{align*}
			\mathbb{E} \bigg( \int_0^T \sigma_i^2(t) \widetilde{\pi}_i^2(t) dt \bigg)\ +\ 
			\mathbb{E} \bigg( \int_0^T \frac{\sigma^2_i(t) \widetilde{\pi}^2_i(t)}{\widehat{X}^{2r}_i(t)} dt \bigg) < \infty, \quad r>0, \quad i = 1,\dots,n.
		\end{align*}
		By Definition \ref{def_1}, we prove that the optimal portfolio $ \widetilde{\pi}_i(t,\boldsymbol{x},y)  $ is admissible under Assumption \ref{Assume_UT}.
	\end{proof}

\section{Appendix: Proof of Theorem \ref{thm}}
\label{apendixA1}
	
	\begin{proof} We prove this theorem in two steps. \\
		\textbf{Step 1.} We construct super- and subsolutions to HJB equation \eqref{HJB}. We begin by expanding the utility function at terminal time $ T $ using power series:
		\begin{align} \label{U}
			U(t, \boldsymbol{x}, y) =
			U^{(0)}(\boldsymbol{x},y) + (T-t) U^{(1)}(\boldsymbol{x}, y) + (T-t)^2 U^{(2)}(\boldsymbol{x}, y)
			+ O\big((T-t)^3\big),
		\end{align}
		where $ U^{(0)}(\boldsymbol{x}, y) = U(T, \boldsymbol{x}, y) = U_T(\boldsymbol{x}) $ by terminal condition.
		For convenience, we denote $ U^{(0)} = U_T(\boldsymbol{x}) $, $ \delta = T-t $, $ U^{(1)} = U^{(1)}(\boldsymbol{x}, y) $, and $ U^{(2)} = U^{(2)}(\boldsymbol{x}, y) $.
		
		Referring to \cite{Fouque}, \cite{Nadtochiy} and \cite{Kumar}, we substitute the expansion \eqref{U} into HJB equation \eqref{HJB} and optimal portfolio \eqref{pi_hat}, and collect terms in powers of $ \delta $.
		Noting $ \delta = T-t > 0 $,
		and comparing the terms between two sides of the HJB equation, we obtain the expansion of HJB equation \eqref{HJB} up to the highest order $ \delta $:
		\begin{multline} \label{HJB_2}
			-U^{(1)} - 2\delta U^{(2)}
			+ \sum_{i=1}^{n} \sigma_i \widehat{\pi}_i^{(1)}  \lambda_i \Big[U^{(0)}_{x_i} + \delta U^{(1)}_{x_i}\Big] 
			+ \frac{1}{2} \sum_{i,j=1}^{n} \rho_{ij}  \sigma_i \sigma_j \widehat{\pi}_i^{(1)}   \widehat{\pi}_j^{(1)}  \Big[U^{(0)}_{x_i x_j} + \delta U^{(1)}_{x_i x_j}\Big] 
			\\
			+ a \delta \sum_{i=1}^n  \sigma_i \widehat{\pi}_i^{(1)}  \omega_i U^{(1)}_{x_iy}
			+ b \delta U_y^{(1)}
			+\frac{1}{2}a^2 \delta U_{yy}^{(1)} =0,
		\end{multline}
		where the first-order optimal portfolio is
		\begin{multline} \label{pi_2}
			\widehat{\pi}_i^{(1)} 
			= 
			\frac{1}{2 \sigma_i \big[U^{(0)}_{x_ix_i} + \delta U^{(1)}_{x_ix_i}\big]}
			\Bigg[
			- 3\Big[\lambda_i \big[U^{(0)}_{x_i} + \delta U^{(1)}_{x_i}\big] + a \omega_i \delta U^{(1)}_{x_iy}\Big] 
			\\
			+ \sum_{k=1}^n 
			\rho_{ik} \Big[\lambda_k\big[U^{(0)}_{x_k} + \delta U^{(1)}_{x_k}\big]+a \omega_k \delta U^{(1)}_{x_ky}\Big]
			\Bigg].
		\end{multline}
		At the highest order $ \delta^0 $ in \eqref{HJB_2} and \eqref{pi_2}, we obtain
		\begin{align} \label{U_1}
			- U^{(1)} + 
			\sum_{i=1}^{n} \sigma_i \widehat{\pi}_i^{(0)} \lambda_i U^{(0)}_{x_i}
			+ \frac{1}{2} \sum_{i,j=1}^{n} \rho_{ij} \sigma_i \sigma_j \widehat{\pi}_i^{(0)} \widehat{\pi}_j^{(0)} U^{(0)}_{x_i x_j} = 0,
		\end{align}
		where the zero-th order optimal portfolio is
		\begin{align} \label{pi_0}
			\widehat{\pi}_i^{(0)}
			= \frac{1}{2 \sigma_i}
			\bigg[-3\lambda_i  +  \sum_{k=1}^n \rho_{ik} \lambda_k\bigg]
			\frac{U^{(0)}_{x_i}}{ U^{(0)}_{x_ix_i}}.
		\end{align}
		Here we recall $ \| \boldsymbol{x} \| := \sum_{i=1}^{n} x_i $ which implies $ U^{(0)}_{x_i} = U^{(0)}_{x_j} $ and $ U^{(0)}_{x_ix_i} = U^{(0)}_{x_ix_j} $, $ i,j = 1, \dots, n $. Substituting \eqref{pi_0} into \eqref{U_1}, we solve $ U^{(1)} $ as equation \eqref{U^1}.

		We solve the $ U^{(2)} $ by substituting \eqref{pi_2} into \eqref{HJB_2}, clearing fractions, and equating the sum of coefficients of $ \delta $ to zero. Firstly we rewrite the first-order optimal portfolio \eqref{pi_2} as:	
		\begin{align} \label{pi_3}
			\widehat{\pi}_i^{(1)}
			= 
			\frac{A_i + \delta B_i}{2\sigma_i \big[U^{(0)}_{x_ix_i} + \delta U^{(1)}_{x_ix_i}\big]},
		\end{align}
		where 
		\begin{align*}
			& A_i =  - 3\lambda_i U^{(0)}_{x_i} +  \sum_{k=1}^n \rho_{ik} \lambda_kU^{(0)}_{x_k}, \\
			& B_i = - 3 \big[ \lambda_i U^{(1)}_{x_i} + a \omega_i U^{(1)}_{x_iy}\big] + \sum_{k=1}^n 
			\rho_{ik} \big[ \lambda_kU^{(1)}_{x_k}+a \omega_k U^{(1)}_{x_ky}\big].
		\end{align*}
		Then substituting \eqref{pi_3} into \eqref{HJB_2}, and taking reduction of fractions, we obtain the expansion of HJB equation \eqref{HJB} at the highest order $ \delta $:	
		\begin{align*}
			0 = 
			& - U^{(1)} + \delta \Big[ b U_y^{(1)} + \frac{a^2}{2} U_{yy}^{(1)} - 2U^{(2)} \Big]
			+ \frac{1}{2}\sum_{i=1}^{n} \frac{A_i + \delta B_i}{U^{(0)}_{x_ix_i} + \delta U^{(1)}_{x_ix_i}} \cdot \lambda_i \Big[U^{(0)}_{x_i} + \delta U^{(1)}_{x_i}\Big] \\
			& + \frac{1}{8} \sum_{i,j=1}^{n} \frac{A_i + \delta B_i}{U^{(0)}_{x_ix_i} + \delta U^{(1)}_{x_ix_i}}\cdot \frac{A_j + \delta B_j}{ U^{(0)}_{x_jx_j} + \delta U^{(1)}_{x_jx_j}} \cdot \rho_{ij} \Big[U^{(0)}_{x_i x_j} + \delta U^{(1)}_{x_i x_j}\Big]
			+ \delta \frac{a}{2} \sum_{i=1}^n  \frac{A_i + \delta B_i}{U^{(0)}_{x_ix_i} + \delta U^{(1)}_{x_ix_i}} \cdot \omega_i U^{(1)}_{x_iy} \\
			=
			& - U^{(1)} + \delta \Big[ b U_y^{(1)} + \frac{a^2}{2} U_{yy}^{(1)} - 2U^{(2)} \Big] 
			+ \frac{1}{2 \Big[ 1 + \delta \sum_{l=1}^n \frac{U^{(1)}_{x_lx_l}}{U^{(0)}_{x_lx_l}}\Big]} \cdot \\
			& 
			\Bigg[
			\sum_{i=1}^{n} \bigg[ A_iU^{(0)}_{x_i} + \delta \Big[B_iU^{(0)}_{x_i}+A_iU^{(1)}_{x_i}\Big] \bigg]
			\bigg[ 1 + \delta \sum_{\substack{l=1 \\ l \neq i}}^n \frac{U^{(1)}_{x_lx_l}}{U^{(0)}_{x_lx_l}}\bigg] 
			\frac{\lambda_i}{U^{(0)}_{x_ix_i}} \\
			& \
			+ \frac{1}{4}
			\sum_{i,j=1}^{n} \bigg[ A_iA_jU^{(0)}_{x_i x_j} + \delta \Big[A_iB_jU^{(0)}_{x_i x_j} + B_iA_jU^{(0)}_{x_i x_j} + A_iA_jU^{(1)}_{x_i x_j}\Big]\bigg]
			\bigg[ 1 + \delta \sum_{\substack{l=1 \\ l \neq i,j}}^n \frac{U^{(1)}_{x_lx_l}}{U^{(0)}_{x_lx_l}}\bigg] 
			\frac{\rho_{ij}}{U^{(0)}_{x_ix_i}U^{(0)}_{x_jx_j}} \\
			& \
			+ \delta a
			\sum_{i=1}^n \bigg[ A_i + \delta B_i \bigg] 
			\bigg[ 1 + \delta \sum_{\substack{l=1 \\ l \neq i}}^n \frac{U^{(1)}_{x_lx_l}}{U^{(0)}_{x_lx_l}}\bigg] 
			\frac{\omega_i U^{(1)}_{x_iy}}{U^{(0)}_{x_ix_i}}
			\Bigg].
		\end{align*}
		Finally clearing fractions and equating the sum of coefficients of $ \delta $ to zero with simplification solve the $ U^{(2)} $ as
		\begin{align} \label{U^2}
			& U^{(2)} = \nonumber \\
			& - \frac{1}{2} U^{(1)} \sum_{i=1}^n \frac{U^{(1)}_{x_ix_i}}{U^{(0)}_{x_ix_i}} + \frac{b}{2}  U_y^{(1)} + \frac{a^2}{4} U_{yy}^{(1)}
			+ \frac{a}{4} \sum_{i=1}^n  A_i \frac{\omega_i U^{(1)}_{x_iy}}{U^{(0)}_{x_ix_i}} \nonumber \\
			& + \frac{1}{4} 
			\sum_{i=1}^{n} \frac{\lambda_i}{U^{(0)}_{x_ix_i}}  
			\bigg[ 
			B_iU^{(0)}_{x_i}+A_iU^{(1)}_{x_i} + A_iU^{(0)}_{x_i} 
			\sum_{\substack{l=1 \\ l \neq i}}^n \frac{U_{x_lx_l}^{(1)}}{U_{x_lx_l}^{(0)}}
			\bigg] 
			\nonumber \\
			& + \frac{1}{16} 
			\sum_{i,j=1}^{n} \frac{\rho_{ij}}{U^{(0)}_{x_ix_i}U^{(0)}_{x_jx_j}}
			\Bigg[ 
			\bigg[A_iH_j + B_iA_j\bigg]U^{(0)}_{x_i x_j} 
			+ A_iA_j \bigg[ U^{(1)}_{x_i x_j} 
			+ U^{(0)}_{x_i x_j} \sum_{\substack{l=1 \\ l \neq i,j}}^n \frac{U_{x_lx_l}^{(1)}}{U_{x_lx_l}^{(0)}} \bigg]
			\Bigg],
		\end{align}
		where
		\begin{align} \label{U_1p}
			U^{(1)} = \
			& - \frac{1}{2}\sum_{i=1}^{n} \frac{\lambda_i \big[U^{(0)}_{x_i}\big]^2}{U^{(0)}_{x_ix_i}}
			\bigg[3\lambda_i
			- \sum_{k=1}^n \rho_{ik} \lambda_k \bigg] \nonumber\\
			& + \frac{1}{8} \sum_{i,j=1}^{n} \frac{\rho_{ij}\big[U^{(0)}_{x_i}\big]^2}{U^{(0)}_{x_ix_i}}
			\bigg[3\lambda_i - \sum_{k=1}^n \rho_{ik} \lambda_k \bigg] 
			\bigg[3\lambda_j - \sum_{k=1}^n \rho_{jk} \lambda_k \bigg], 
		\end{align}
		\begin{align*} 
			U^{(1)}_{x_i} = \
			&  
			- \frac{1}{2} \sum_{i=1}^{n} \lambda_i 
			U^{(0)}_{x_i} \bigg[2 - \frac{U^{(0)}_{x_i}U^{(0)}_{x_ix_ix_i}}{\big[U^{(0)}_{x_ix_i}\big]^2}\bigg]
			\bigg[3\lambda_i - \sum_{k=1}^n \rho_{ik} \lambda_k\bigg] \\
			& + \frac{1}{8} \sum_{i,j=1}^{n} \rho_{ij} 
			U^{(0)}_{x_i} \bigg[2 - \frac{U^{(0)}_{x_i}U^{(0)}_{x_ix_ix_i}}{\big[U^{(0)}_{x_ix_i}\big]^2}\bigg]
			\bigg[3\lambda_i - \sum_{k=1}^n \rho_{ik} \lambda_k \bigg] 
			\bigg[3\lambda_j - \sum_{k=1}^n \rho_{jk} \lambda_k \bigg],
		\end{align*}
		\begin{align} \label{U_1xx}
			& U^{(1)}_{x_ix_j} = U^{(1)}_{x_ix_i} = \nonumber \\
			& - \frac{1}{2} \sum_{i=1}^{n} \lambda_i 
			U^{(0)}_{x_ix_i} 
			\bigg[2 - \frac{U^{(0)}_{x_i}U^{(0)}_{x_ix_ix_i}}{\big[U^{(0)}_{x_ix_i}\big]^2}\bigg]
			\bigg[3\lambda_i - \sum_{k=1}^n \rho_{ik} \lambda_k\bigg] \nonumber \\
			& + \frac{1}{2} \sum_{i=1}^{n} \lambda_i 
			U^{(0)}_{x_i}
			\bigg[\frac{U^{(0)}_{x_ix_ix_i}}{U^{(0)}_{x_ix_i}} 
			+ \frac{U^{(0)}_{x_i}U^{(0)}_{x_ix_ix_ix_i}}{\big[U^{(0)}_{x_ix_i}\big]^2} 
			- \frac{2U^{(0)}_{x_i}\big[U^{(0)}_{x_ix_ix_i}\big]^2}{\big[U^{(0)}_{x_ix_i}\big]^3}\bigg]
			\bigg[3\lambda_i - \sum_{k=1}^n \rho_{ik} \lambda_k\bigg]  \nonumber \\
			& + \frac{1}{8} \sum_{i,j=1}^{n} \rho_{ij} 
			U^{(0)}_{x_ix_i} 
			\bigg[2 - \frac{U^{(0)}_{x_i}U^{(0)}_{x_ix_ix_i}}{\big[U^{(0)}_{x_ix_i}\big]^2}\bigg]
			\bigg[3\lambda_i - \sum_{k=1}^n \rho_{ik} \lambda_k \bigg] 
			\bigg[3\lambda_j - \sum_{k=1}^n \rho_{jk} \lambda_k \bigg] \nonumber \\
			& - \frac{1}{8} \sum_{i,j=1}^{n} \rho_{ij} 
			U^{(0)}_{x_i}
			\bigg[\frac{U^{(0)}_{x_ix_ix_i}}{U^{(0)}_{x_ix_i}} 
			+ \frac{U^{(0)}_{x_i}U^{(0)}_{x_ix_ix_ix_i}}{\big[U^{(0)}_{x_ix_i}\big]^2} 
			- \frac{2U^{(0)}_{x_i}\big[U^{(0)}_{x_ix_ix_i}\big]^2}{\big[U^{(0)}_{x_ix_i}\big]^3}\bigg]
			\bigg[3\lambda_i - \sum_{k=1}^n \rho_{ik} \lambda_k \bigg] 
			\bigg[3\lambda_j - \sum_{k=1}^n \rho_{jk} \lambda_k \bigg].
		\end{align}
		Here we aslo recall $ \| \boldsymbol{x} \| := \sum_{i=1}^{n} x_i $ which implies $ U^{(0)}_{x_i} = U^{(0)}_{x_j} $, $ U^{(0)}_{x_ix_i} = U^{(0)}_{x_ix_j} $, $ U^{(0)}_{x_ix_ix_i} = U^{(0)}_{x_ix_jx_k} $, and $ U^{(0)}_{x_ix_ix_ix_i} = U^{(0)}_{x_ix_jx_kx_l} $, $ i,j,k,l = 1, \dots, n $. 
		
		We enumerate all the terms of $ U^{(2)} $ as $ u_1^{(2)} , \cdots , u_m^{(2)} $, that is $ U^{(2)} = \sum_{i=1}^m u_i^{(2)} $. We then prove Lemma \ref{A.0} in the Appendix \ref{apendixA} showing $ u_i^{(2)} \sim f(\boldsymbol{x}) $ for $ 1\leq i \leq m $, where $ f(\boldsymbol{x}) = 1 $ under Case 1 of Assumption \ref{Assume_UT} and $ f(\boldsymbol{x}) = \|\boldsymbol{x}\|^{1-\alpha} + \|\boldsymbol{x}\|^{1-\beta} $ under Case 2 of Assumption \ref{Assume_UT}.
		Now we let
		\begin{align*}
		u^{(2)}(\boldsymbol{x},y) = 1 + m \cdot \max_{1 \leq i \leq m} \sup \frac{\big|u_i^{(2)}(\boldsymbol{x},y)\big|}{f(\boldsymbol{x})}.
		\end{align*}
		Then we define super-solution $ \overline{U} = \overline{U}(t,\boldsymbol{x},y) $ and sub-solution $ \underline{U} = \underline{U}(t,\boldsymbol{x},y) $ to HJB equation \eqref{HJB} by 
		\begin{align*}
			\overline{U}(t,\boldsymbol{x},y) & = U_T(\boldsymbol{x}) + (T-t)U^{(1)}(\boldsymbol{x},y) + (T-t)^2u^{(2)}(\boldsymbol{x},y)f(\boldsymbol{x}), \\
			\underline{U}(t,\boldsymbol{x},y) & = U_T(\boldsymbol{x}) + (T-t)U^{(1)}(\boldsymbol{x},y) - (T-t)^2u^{(2)}(\boldsymbol{x},y)f(\boldsymbol{x}).
		\end{align*}
		We note the coefficient of $ U^{(2)} $ is $ - 2(T-t) < 0 $ in \eqref{HJB_2}. After substituting $ \overline{U} $ into the left hand side of HJB equation \eqref{HJB} and clearing fractions, we observe that the coefficient of $ T-t $ is strictly negative by definition of super-solution. On the other hand, substituting $ \underline{U} $ gives that the coefficient of $ T-t $ is strictly positive by definition of sub-solution.
		
		Because $ U^{(2)} $ is solved by equating the sum of coefficients of $ T-t $ to zero, therefore $ u_i^{(2)} \sim f(\boldsymbol{x}) $, $ 1\leq i \leq m $,  imply that the coefficient of $ T-t $ in either $ \overline{U}_t + \mathcal{H}(\overline{U}) $ or $ \underline{U}_t + \mathcal{H}(\underline{U}) $ is in the order of $ f(\boldsymbol{x}) $.
		Referring to the proof of Lemma \ref{A.0}, from \eqref{U_1xx}, we observe that $ U^{(1)}_{x_ix_j} $ is a linear combination of the following terms: 
		\begin{align*}
			U^{(0)}_{x_ix_i}, \
			\frac{U^{(0)}_{x_i}U^{(0)}_{x_ix_ix_i}}{U^{(0)}_{x_ix_i}}, \
			\frac{\big[U^{(0)}_{x_i}\big]^2U^{(0)}_{x_ix_ix_ix_i}}{\big[U^{(0)}_{x_ix_i}\big]^2}, \text{ and }
			\frac{\big[U^{(0)}_{x_i}\big]^2 \big[U^{(0)}_{x_ix_ix_i}\big]^2}{\big[U^{(0)}_{x_ix_i}\big]^3},\ i = 1, \dots, n.
		\end{align*}
		For $ i = 1, \dots, n $, under Case 1 of Assumption \ref{Assume_UT},
		\begin{align*}
			& \lim_{\|\boldsymbol{x}\| \rightarrow \infty} U^{(0)}_{x_ix_i} \big/ U^{(0)}_{x_ix_i} = 1, 
			\qquad \qquad \qquad
			\lim_{\|\boldsymbol{x}\| \rightarrow \infty} \frac{U^{(0)}_{x_i}U^{(0)}_{x_ix_ix_i}}{U^{(0)}_{x_ix_i}} \Big/ U^{(0)}_{x_ix_i} = 2, \\
			& \lim_{\|\boldsymbol{x}\| \rightarrow \infty} \frac{\big[U^{(0)}_{x_i}\big]^2U^{(0)}_{x_ix_ix_ix_i}}{\big[U^{(0)}_{x_ix_i}\big]^2} \Big/ U^{(0)}_{x_ix_i} = 6,
			\quad
			\lim_{\|\boldsymbol{x}\| \rightarrow \infty} \frac{\big[U^{(0)}_{x_i}\big]^2 \big[U^{(0)}_{x_ix_ix_i}\big]^2}{\big[U^{(0)}_{x_ix_i}\big]^3} \Big/ U^{(0)}_{x_ix_i} = 8;
		\end{align*}
		and under Case 2 of Assumption \ref{Assume_UT}, $ \lim_{\|\boldsymbol{x}\| \rightarrow \infty} U^{(0)}_{x_ix_i} \big/ U^{(0)}_{x_ix_i} = 1 $,
		\begin{align*}
			& \lim_{\|\boldsymbol{x}\| \rightarrow \infty} \frac{U^{(0)}_{x_i}U^{(0)}_{x_ix_ix_i}}{U^{(0)}_{x_ix_i}} \Big/ U^{(0)}_{x_ix_i} 
			= 
			\left\{
			\begin{array}{ll}
				1+ 1/\alpha, \text{ if } \alpha \leq \beta \\
				1+ 1/\beta, \text{ if } \alpha \geq \beta
			\end{array}
			\right.
			, \\
			& \lim_{\|\boldsymbol{x}\| \rightarrow \infty} \frac{\big[U^{(0)}_{x_i}\big]^2U^{(0)}_{x_ix_ix_ix_i}}{\big[U^{(0)}_{x_ix_i}\big]^2} \Big/ U^{(0)}_{x_ix_i} 
			=
			\left\{
			\begin{array}{ll}
				(\alpha+1)(\alpha+2)/\alpha^2, \text{ if } \alpha \leq \beta \\
				(\beta+1)(\beta+2)/\beta^2, \text{ if } \alpha \geq \beta
			\end{array}
			\right.
			, \\
			& \lim_{\|\boldsymbol{x}\| \rightarrow \infty} \frac{\big[U^{(0)}_{x_i}\big]^2 \big[U^{(0)}_{x_ix_ix_i}\big]^2}{\big[U^{(0)}_{x_ix_i}\big]^3} \Big/ U^{(0)}_{x_ix_i} 
			= 
			\left\{
			\begin{array}{ll}
				(\alpha +1)^2/\alpha^2, \text{ if } \alpha \leq \beta \\
				(\beta +1)^2/\beta^2, \text{ if } \alpha \geq \beta
			\end{array}
			\right.
			.
		\end{align*}
		Hence, for $ i,j = 1, \dots, n $, we have $ U^{(1)}_{x_ix_j}  \sim U^{(0)}_{x_ix_j} $, then $ U^{(0)}_{x_ix_j} U^{(1)}_{x_ix_j} \sim \big[ U^{(0)}_{x_ix_j} \big]^2 $. 
		This further implies that the coefficient of $ T-t $ in either $ \big[\overline{U}_{x_ix_j}\big]^2 $ or $ \big[\underline{U}_{x_ix_j}\big]^2 $ is in the order of $ \big[ U^{(0)}_{x_ix_j} \big]^2 $. 
		Like inequality (3.8) in \cite{Kumar}, for $ i,j = 1, \dots, n $, we obtain the same inequalities as below
		\begin{align} 
			\big[\overline{U}_{x_ix_j}\big]^2 \big|\overline{U}_t + \mathcal{H}(\overline{U})\big| & \leq c_3 (T-t)\tilde{f}(\boldsymbol{x}), \label{3.8}\\
			\big[\underline{U}_{x_ix_j}\big]^2 \big|\underline{U}_t + \mathcal{H}(\underline{U})\big| & \leq c_4 (T-t)\tilde{f}(\boldsymbol{x}), \nonumber
		\end{align}
		where $ c_3 $ and $ c_4 $ are constants, and $ \tilde{f}(\boldsymbol{x}) \sim \big[U^{(0)}_{x_ix_j}\big]^2 f(\boldsymbol{x}) $.
		Under Case 1 of Assumption \ref{Assume_UT}, we have $ \tilde{f}(\boldsymbol{x}) = \|\boldsymbol{x}\|^{-4} $.
		Under Case 2 of Assumption \ref{Assume_UT}, we have
		\begin{align*}
			\lim_{\|\boldsymbol{x}\| \rightarrow \infty} \frac{\big[U^{(0)}_{x_ix_j}\big]^2}{\|\boldsymbol{x}\|^{-2\alpha-2} + \|\boldsymbol{x}\|^{-2\beta-2}} 
			& = 
			\lim_{\|\boldsymbol{x}\| \rightarrow \infty} \frac{\big[-c_1\alpha \|\boldsymbol{x}\|^{-\alpha-1} - c_2\beta \|\boldsymbol{x}\|^{-\beta-1}\big]^2}{\|\boldsymbol{x}\|^{-2\alpha-2} + \|\boldsymbol{x}\|^{-2\beta-2}} \\
			& = 
			\left\{
			\begin{array}{ll}
				c_1^2 \alpha^2, \qquad\qquad \text{ if } \alpha < \beta \\
				(c_1+c_2)^2\alpha^2/2, \text{if } \alpha = \beta \\
				c_2^2\beta^2, \qquad\qquad \text{ if } \alpha > \beta
			\end{array}
			\right.
			,
		\end{align*}
		that is $ \big[U^{(0)}_{x_ix_j}\big]^2 \sim \|\boldsymbol{x}\|^{-2\alpha-2} + \|\boldsymbol{x}\|^{-2\beta-2} $ which gives $ \tilde{f}(\boldsymbol{x}) = \big[\|\boldsymbol{x}\|^{-2\alpha-2} + \|\boldsymbol{x}\|^{-2\beta-2}\big]\big[\|\boldsymbol{x}\|^{1-\alpha} + \|\boldsymbol{x}\|^{1-\beta}\big] $. 
		For $ i,j = 1, \dots, n $, we observe that $ \frac{1}{c_3}\|\boldsymbol{x}\|^{-2} \leq \overline{U}_{x_ix_j} \leq c_3\|\boldsymbol{x}\|^{-2} < 0 $ under Case 1 of Assumption \ref{Assume_UT} and $ \frac{1}{c_3} \big[\|\boldsymbol{x}\|^{-\alpha-1} + \|\boldsymbol{x}\|^{-\beta-1}\big] \leq \overline{U}_{x_ix_j} \leq c_3\big[\|\boldsymbol{x}\|^{-\alpha-1} + \|\boldsymbol{x}\|^{-\beta-1}\big] <0 $ under Case 2 of Assumption \ref{Assume_UT}, $ i,j = 1, \dots, n $, for some $ c_3 <0 $. 
		Thus $ \big[\overline{U}_{x_ix_j}\big]^2 $ is bounded away from zero. 
		So is the $ \big[\underline{U}_{x_ix_j}\big]^2 $ for some $ c_4 <0 $.
		
		For $ i,j = 1, \dots, n $, we already have the coefficient of $ T-t $ in $ \big[\overline{U}_{x_ix_j}\big]^2 \big[\overline{U}_t + \mathcal{H}(\overline{U})\big] $ is strictly negative, in the order of $ \tilde{f}(\boldsymbol{x}) $, and bounded. In addition, we observe that inequality \eqref{3.8} implies the $ o(T-t )$ terms in $ \big[\overline{U}_{x_ix_j}\big]^2 \big[\overline{U}_t + \mathcal{H}(\overline{U})\big] $ are in the order of $ \tilde{f}(\boldsymbol{x}) $ and bounded. For $ t $ near $ T $, the strictly negative coefficient of $ T-t $ uniformly dominates the $ o(T-t ) $ terms, that is $ \overline{U}_t + \mathcal{H}(\overline{U}) < 0 $. Thus $ \overline{U} $ is the classical super-solution of HJB equation \eqref{HJB}. To prove $ \underline{U} $ is the classical sub-solution of HJB equation \eqref{HJB}, we can use a mirror of this discussion. \\
		\textbf{Step 2.} We prove $ \underline{U}(t,\boldsymbol{x},y) \leq J(t,\boldsymbol{x},y) \leq \overline{U}(t,\boldsymbol{x},y) $.
		For $ i = 1, \dots, n $, we substitute super-solution $ \overline{U} $ into optimal portfolio \eqref{pi_hat} to generate the portfolio
		\begin{align*}
			\overline{\widehat{\pi}}_i = \overline{\widehat{\pi}}_i (t,\boldsymbol{x},y)
			= \frac{1}{2\sigma_i \overline{U}_{x_ix_i}} 
			\Bigg[
			-3\Big[\lambda_i\overline{U}_{x_i}+a \omega_i \overline{U}_{x_iy}\Big]
			+  \sum_{k=1}^n 
			\rho_{ik} \Big[\lambda_k \overline{U}_{x_k}+a \omega_k \overline{U}_{x_ky}\Big]
			\Bigg].
		\end{align*}
		Denoting $ \boldsymbol{X}(t) = \big(X_1(t), \dots, X_n(t)\big) $, we apply the multidimensional It\^{o}'s formula to $ \overline{U} $ and obtain
		\begin{align}
			& \overline{U}\big(T, \boldsymbol{X}(T), Y(T)\big) - \overline{U}(t,\boldsymbol{x},y)  \nonumber\\
			& = \int_t^T \bigg[ \overline{U}_t + \sum_{i=1}^{n}\sigma_i \overline{\widehat{\pi}}_i \lambda_i \overline{U}_{x_i} + b\overline{U}_y
			+ \frac{1}{2} \sum_{i,j=1}^{n} \rho_{ij} \sigma_i \overline{\widehat{\pi}}_i  \sigma_j \overline{\widehat{\pi}}_j \overline{U}_{x_ix_j} + a\sum_{i=1}^{n} \sigma_i \overline{\widehat{\pi}}_i \omega_i  \overline{U}_{x_iy} +\frac{1}{2}a^2\overline{U}_{yy} \bigg] ds \\
			& \quad
			+ \sum_{i=1}^{n} \int_t^T \bigg[ \sigma_i \overline{\widehat{\pi}}_i \overline{U}_{x_i} + a \omega_i \overline{U}_y \bigg] dW_i
			+ \int_t^T \bigg[ a \Big[1-\sum_{i=1}^n \omega_i^2\Big]^{1/2} \overline{U}_y \bigg] dW_0. \label{mart_1}
		\end{align}
		We note the stochastic integrals \eqref{mart_1} are local martingales. Then we let $ \{t_n \}_{n=1}^\infty \subset [t, T] $ be a sequence of stopping times such that $ t_n \leq t_{n+1} $ and $ t_n \rightarrow T $. With replacing $ T $ by $ t_n $, we observe the stochastic integrals \eqref{mart_3} are martingales. 
		\begin{align}
			& \overline{U}\big(t_n, \boldsymbol{X}(t_n), Y(t_n)\big) - \overline{U}(t,\boldsymbol{x},y)  \nonumber\\
			& = \int_t^{t_n} \bigg[ \overline{U}_t + \sum_{i=1}^{n}\sigma_i \overline{\widehat{\pi}}_i \lambda_i \overline{U}_{x_i} + b\overline{U}_y
			+ \frac{1}{2} \sum_{i,j=1}^{n} \rho_{ij} \sigma_i \overline{\widehat{\pi}}_i  \sigma_j \overline{\widehat{\pi}}_j \overline{U}_{x_ix_j} + a\sum_{i=1}^{n} \sigma_i \overline{\widehat{\pi}}_i \omega_i  \overline{U}_{x_iy} +\frac{1}{2}a^2\overline{U}_{yy} \bigg] ds 
			\label{mart_2}\\
			& \quad
			+ \sum_{i=1}^{n} \int_t^{t_n} \bigg[ \sigma_i \overline{\widehat{\pi}}_i \overline{U}_{x_i} + a \omega_i \overline{U}_y \bigg] dW_i
			+ \int_t^{t_n} \bigg[ a \Big[1-\sum_{i=1}^n \omega_i^2\Big]^{1/2} \overline{U}_y \bigg] dW_0. \label{mart_3}
		\end{align}
		Furthermore, we note that the integrand of \eqref{mart_2} is exactly the $ \overline{U}_t + \mathcal{H}(\overline{U}) < 0 $. By the martingale property that conditional expectation of martingale is zero, we have $ \mathbb{E}\big(\overline{U}(t_n, \boldsymbol{X}(t_n), Y(t_n)) - \overline{U} \big| \boldsymbol{x}, y \big) < 0 $, that is
		\begin{align} \label{ineq2}
			 \mathbb{E}\Big(\overline{U}\big(t_n, \boldsymbol{X}(t_n), Y(t_n)\big) \Big| \boldsymbol{x}, y \Big) < \overline{U}.
		\end{align}
		
		By definition of $ \overline{U} $ and triangle inequality, we have
		\begin{align*}
			& \Big| \overline{U}\big(t_n, \boldsymbol{X}(t_n), Y(t_n)\big) \Big| \\
			& = \Big| U_T\big(\boldsymbol{X}(t_n)\big) + (T-t_n)U^{(1)}\big(\boldsymbol{X}(t_n),Y_{t_n}\big) + (T-t_n)^2u^{(2)}\big(\boldsymbol{X}(t_n),Y_{t_n}\big) f\big(\boldsymbol{X}(t_n)\big) \Big| \\
			& \leq \Big| U_T\big(\boldsymbol{X}(t_n)\big) \Big| + \Big| T U^{(1)}\big(\boldsymbol{X}(t_n),Y_{t_n}\big) \Big| + \Big| T^2u^{(2)}\big(\boldsymbol{X}(t_n),Y_{t_n}\big) f\big(\boldsymbol{X}(t_n)\big) \Big| \\
			& \leq c_5 g\big(\boldsymbol{X}(t_n)\big),
		\end{align*}
		with some constant $ c_5 $. 
		Referring to the proof of Lemma \ref{A.0}, from \eqref{U_1p}, we observe that $ U^{(1)} $ is a linear combination of the terms that are equivalent to $ [U^{(0)}_{x_i}]^2 \big/ U^{(0)}_{x_ix_i} $, $ i = 1, \dots, n $, and under Case 1 of Assumption \ref{Assume_UT}, $ U^{(1)} \sim f(\boldsymbol{x}) $.
		We then observe $ u_j^{(2)} \sim f(\boldsymbol{x}) \Rightarrow u^{(2)}(\boldsymbol{x},y) f(\boldsymbol{x}) \sim f(\boldsymbol{x}) $. Thus we have $ g(\boldsymbol{x}) = U_T(\boldsymbol{x}) + f(\boldsymbol{x}) = \ln\|\boldsymbol{x}\| +1 $ under this case.
		Under Case 2 of Assumption \ref{Assume_UT}, we observe that $ U_T(\boldsymbol{x}) \sim f(\boldsymbol{x}) $ which gives $ g(\boldsymbol{x}) = f(\boldsymbol{x}) = \|\boldsymbol{x}\|^{1-\alpha} + \|\boldsymbol{x}\|^{1-\beta} $.
		
		We prove Lemma \ref{A.1} in the Appendix \ref{apendixA} showing that $ \big\{ g\big(\boldsymbol{X}(t_n)\big) \big\}_{n=1}^\infty $ is uniformly bounded by an integrable random variable. We already have $ \big| \overline{U}\big(t_n, \boldsymbol{X}(t_n), Y_{t_n}\big) \big| \leq c_5 g\big(\boldsymbol{X}(t_n)\big) $. And we observe that 
		\begin{align*}
			t_n \rightarrow T\ \Rightarrow\ \overline{U}\big(t_n, \boldsymbol{X}(t_n), Y(t_n)\big) \rightarrow \overline{U}\big(T, \boldsymbol{X}(T), Y(T)\big) = U_T\big(\boldsymbol{X}(T)\big).
		\end{align*}
		By the dominated convergence theorem, we have
		\begin{align} \label{lim}
			\lim_{n \rightarrow \infty} \mathbb{E}\Big(\overline{U}\big(t_n, \boldsymbol{X}(t_n), Y(t_n)\big) \Big| \boldsymbol{x}, y \Big) = \mathbb{E}\Big(U_T\big(\boldsymbol{X}(T)\big) \Big| \boldsymbol{x}, y \Big).
		\end{align}
		Combining \eqref{ineq2} with \eqref{lim}, we have $ \mathbb{E}\big(U_T(\boldsymbol{X}(T)) \big| \boldsymbol{x}, y \big) < \overline{U} $ . 
		We also prove Lemma \ref{A.2} in the Appendix \ref{apendixA} showing that $ \overline{\widehat{\pi}}_i $ is admissible. Thus $ J \leq \overline{U} $. 
		
		To prove $ J \geq \underline{U} $, we can use a mirror of above discussions. Finally, by definitions of $ \underline{U} $ and $ \overline{U} $, the inequality \eqref{ineq} holds.
	\end{proof}

\section{Appendix: Tables for Assumption \ref{Conv}}
\label{apendixC}

\begin{table}[H]
	\caption{Example of portfolios that satisfy the Assumption \ref{Conv}, in different Global Industry Classification Standard (GICS) sectors. The calculations of $ \sum_{j=1}^n |\rho_{ij}| \sigma_j(y) $ and $ 3\sigma_i(y) $ are based on the daily adjusted closing price, from 2023-05-16 to 2024-05-15.}
	\begin{center}
		\begin{tabular}{llcc}
			\hline
			Sector & Company (asset) & $ \displaystyle{\sum_{j=1}^n |\rho_{ij}| \sigma_j(y)} $ & $ 3\sigma_i(y) $ \\
			\hline
			\multirow{3}{*}{Healthcare}
			& Thermo Fisher Scientific Inc. & $ 56.32 $ & $ 113.82 $ \\
			& Amgen Inc. & $ 50.04 $ & $ 83.56 $ \\
			& UnitedHealth Group Incorporated & $ 55.82 $ & $ 76.34 $ \\
			\hline
			\multirow{4}{*}{Real Estate}
			& Digital Realty Trust, Inc. & $ 38.45 $ & $ 44.36 $ \\
			& Simon Property Group, Inc. & $ 46.57 $ & $ 55.02 $ \\
			& Public Storage & $ 34.84 $ & $ 43.74 $ \\
			& American Tower Corporation & $ 39.95 $ & $ 42.66 $ \\
			\hline
			\multirow{5}{*}{Industrials}
			& General Electric Company & $ 59.36 $ & $ 74.98 $ \\
			& Deere \& Company & $ 33.84 $ & $ 61.87 $ \\
			& Lockheed Martin Corporation & $ 33.53 $ & $ 43.90 $ \\
			& Union Pacific Corporation & $ 42.13 $ & $ 56.93 $ \\
			& The Boeing Company & $ 38.98 $ & $ 67.56 $ \\
			\hline
		\end{tabular}
	\end{center}
\end{table}

\begin{table}[H]
	\caption{Example of portfolios that satisfy the Assumption \ref{Conv}, in different GICS sectors. The calculations of $ \sum_{j=1}^n |\rho_{ij}| \sigma_j(y) $ and $ 3\sigma_i(y) $ are based on the daily adjusted closing price, from 2023-11-14 to 2024-05-15.}
	\begin{center}
		\begin{tabular}{llcc}
			\hline
			Sector & Company (asset) & $ \displaystyle{\sum_{j=1}^n |\rho_{ij}| \sigma_j(y)} $ & $ 3\sigma_i(y) $ \\
			\hline
			\multirow{3}{*}{Healthcare}
			& Thermo Fisher Scientific Inc. & $ 58.00 $ & $ 102.71 $ \\
			& Amgen Inc. & $ 29.52 $ & $ 49.29 $ \\
			& UnitedHealth Group Incorporated & $ 53.23 $ & $ 78.14 $ \\
			\hline
			\multirow{4}{*}{Real Estate}
			& Digital Realty Trust, Inc. & $ 14.99 $ & $ 15.08 $ \\
			& Simon Property Group, Inc. & $ 18.94 $ & $ 28.64 $ \\
			& Prologis, Inc. & $ 21.88 $ & $ 31.02 $ \\
			& American Tower Corporation & $ 21.37 $ & $ 34.98 $ \\
			\hline
			\multirow{4}{*}{Industrials}
			& General Electric Company & $ 57.44 $ & $ 74.22 $ \\
			& Lockheed Martin Corporation & $ 34.47 $ & $ 40.78 $ \\
			& Union Pacific Corporation & $ 25.40 $ & $ 27.98 $ \\
			& The Boeing Company & $ 52.89 $ & $ 82.29 $ \\
			\hline
		\end{tabular}
	\end{center}
\end{table}


\textbf{Data availability statement}: The raw data required to reproduce the above findings are available to download from \url{https://finance.yahoo.com}. \\

\noindent \textbf{Acknowledgments}: We thank the anonymous referees for their constructive comments and suggestions, which helped us improve the manuscript.



\begin{thebibliography}{99} \footnotesize






\bibitem{Burden} R. Burden and J. Faires, \textit{Numerical Analysis, Ninth Edition}, Brooks/Cole, Cengage Learning, US, 2010.

\bibitem{Multi2} X. Chen and M. Dai (2013), \textit{Characterization of Optimal Strategy for Multiasset Investment and Consumption with Transaction Costs},  SIAM J. Financial Math, \textbf{4}(1), 857-883.

\bibitem{FinSt1} S. Deng, X. Li, H. Pham, and X. Yu (2022), \textit{Optimal consumption with reference to past spending maximum}, Finance and Stochastics, \textbf{26}, 217–266.


\bibitem{Fouque} J.-P. Fouque, R. Sircar, and T. Zariphopoulou (2017). \textit{Portfolio Optimization and Stochastic Volatility Asymptotics}, Mathematical Finance, \textbf{27}(3), 704–745.


\bibitem{FinSt2} P. Guasoni, Y. Mishura, and M. R\'asonyi (2021),  \textit{High-frequency trading with fractional Brownian motion}, Finance and Stochastics, \textbf{25}, 277–310. 

\bibitem{FinSt3} Y. Kabanov,  C. Kl\"uppelberg (2004), \textit{A geometric approach to portfolio optimization in models with transaction costs}, Finance and Stochastics, \textbf{8}, 207–227.

\bibitem{Kumar} R. Kumar and H. Nasralah (2018), \textit{Asymptotic approximation of optimal portfolio for small time horizons}, SIAM J. Financial Math., \textbf{9}(2),  755-774.


\bibitem{LinSG} M. Lin and I. SenGupta (2021), \textit{Analysis of optimal portfolio on finite and small time horizons for a stochastic volatility market model}, SIAM J. Financial Math, \textbf{12}(4), 1596-1624.

\bibitem{LinSG2} M. Lin, I. SenGupta, and W. Wilson (2024), \textit{Estimation of VaR with jump process: Application in corn and soybean markets}, to appear in Applied Stochastic Models in Business and Industry, \url{http://doi.org/10.1002/asmb.2880}. 

\bibitem{Merton} R. C. Merton (1969), \emph{Lifetime portfolio selection under uncertainty: The continuous-time case}, Rev. Econ. Stat., \textbf{51}, 247-257.


\bibitem{Multi1} K. Muthuraman and S. Kumar (2006), \textit{Multidimensional portfolio optimization with proportional transaction costs}, Mathematical Finance, \textbf{16}(2), 301-335. 

\bibitem{Nadtochiy} S. Nadtochiy and T. Zariphopoulou (2013), \textit{An approximation scheme for solution to the optimal investment problem in incomplete markets}, SIAM J. Financial Math., \textbf{4},  494-538.




\bibitem{Roberts1} M. Roberts and I. SenGupta  (2020), \textit{Infinitesimal generators for two-dimensional L\'evy process-driven hypothesis testing}, Ann. Finance, \textbf{16}, 121-139. 

\bibitem{Roberts2} M. Roberts and I. SenGupta  (2020), \textit{Sequential hypothesis testing in machine learning, and crude oil price jump size detection}, Appl. Math. Finance, \textbf{27}, 374-395. 

\bibitem{Nick} N. Salmon and I. SenGupta (2021), \textit{Fractional Barndorff-Nielsen and Shephard model: applications in variance and volatility swaps, and hedging}, Annals of Finance, \textbf{17}, 529–558.
\end{thebibliography}
\end{document}